\def\esssup{{\rm esssup}}
\def\eqref#1{(\ref{#1})}
\renewcommand\cal{\mathcal }
\renewcommand{\Bbb}{\mathbb}
\newcommand{\bbZ}{{\Bbb Z}}
\newcommand{\bbR}{{\Bbb R}}
\newcommand{\bbN}{{\Bbb N}}
\newcommand\const{\mbox{\rm const}\,}
\newcommand\E{\Bbb E}
\renewcommand\P{\Bbb P}
\newcommand\noi{\noindent}
\newcommand{\what}{\widehat}
\def\D{{\cal D}([0,\infty),\bbR)}
\def\wtilde{\widetilde }
\def\D{D} %%% the block-maxima
\title{On the Estimation of the Heavy--Tail Exponent in Time Series using the Max--Spectrum\thanks{The authors were partially supported by
 NSF grant DMS-0806094.}}
\author{Stilian A.\ Stoev\thanks{ {\em Address for correspondence:} 
 Stilian A.\ Stoev, Department of Statistics, The University of Michigan, 439 West Hall, 
 1085 South University, Ann Arbor, MI 48109-1107, U.S.A. {\it E-mail:} sstoev@umich.edu.}{\, \,  and \ }George Michailidis}
\date{January 16, 2009 \\ {\it Department of Statistics, The University of Michigan, Ann Arbor, U.S.A.}}
\begin{document}

\maketitle

%
%
%%\tableofcontents 
%
%

\begin{abstract}
This paper addresses the problem of estimating the tail index $\alpha$ of distributions 
with heavy, Pareto--type tails for dependent data, that is of interest in the areas of
finance, insurance, environmental monitoring and teletraffic analysis. A novel approach
based on the max self--similarity scaling behavior of block maxima is introduced. The method
exploits the increasing lack of dependence of maxima over large size blocks, which proves
useful for time series data. 

We establish the consistency and asymptotic normality of the proposed max--spectrum estimator 
for a large class of $m-$dependent time series, in the regime of intermediate block--maxima.
In the regime of large block--maxima, we demonstrate the distributional consistency of the estimator
for a broad range of time series models including linear processes. The max--spectrum estimator
is a robust and computationally efficient tool, which provides a novel time--scale perspective
to the estimation of the tail--exponents. Its performance is illustrated over synthetic and real data sets.

{{\it Keywords:} heavy--tail exponent, max--spectrum, block--maxima, heavy tailed time series, moving maxima, 
 max--stable, Fr{\'e}chet distribution}

\end{abstract}

\section{Introduction}
 \label{s:intro}
 
The problem of estimating the exponent in heavy tailed data has a long history in statistics,
due to its practical importance and the technical challenges it poses. Heavy tailed distributions
are characterized by the slow, hyperbolic decay of their tail.  Formally, a real valued random 
variable $X$ 
with cumulative distribution function (c.d.f.) $F(x) = \P\{X\le x\},\ x\in\bbR$ is (right) heavy--tailed
with index $\alpha>0$, if 
\beq\label{e:F-tail}
  \P\{X>x\} = 1 - F(x) \sim L(x) x^{-\alpha},\ \mbox{ as } x\to\infty, 
\eeq
where $\sim$  means that the ratio of the left--hand side
to the right--hand side in \refeq{F-tail} tends to $1$, as $x\to\infty$.  Here
$L(\cdot)$ is a slowly varying function at infinity, i.e.\ $L(\lambda x)/L(x) \to 1,$ as $x\to\infty$,
for all $\lambda>0$.
For simplicity purposes, we suppose that  $X$ is almost surely positive i.e.\ $F(0)=0$, and we also focus on
the case when $L(\cdot)$ is asymptotically constant, namely
\beq\label{e:Lx}
 L(x) \sim \sigma_0^\alpha,\ \ \mbox{ as }x\to\infty,
\eeq
for some $\sigma_0>0$.  The case when
the slowly varying function $L(\cdot)$ is non--trivial is discussed in the Remarks after Theorem \ref{t:main}, below.

The {\it tail index} ({\it exponent}) $\alpha$ controls the rate of decay of the tail of $F$.
The presence of heavy tails in data was originally noted in the work of Zipf on word frequencies in 
languages (\cite{zipf:1932}), who also introduced a graphical device for their detection
(\cite{desousa:michailidis:2004}). Subsequently, \cite{mandelbrot:1960} noted their
presence in financial data. Since the early 1970s heavy tailed behavior has been noted
 in many other scientific fields, such as hydrology, insurance claims and social and biological networks
 (see, e.g.\ \cite{finkenstadt:rootzen:2004} and \cite{barabasi:2002}). In particular, the 
 emergence of the Internet and the World Wide
 Web gave a new impetus to the study of heavy tailed distributions, due to their omnipresence
 in Internet packet and flow data, the topological structure of the Web, the size of computer files, etc.
 (see e.g.\ \cite{adler:feldman:taqqu:1998},  \cite{resnick:1997}, \cite{faloutsos:faloutsos:faloutsos:1999},
 \cite{adamic:huberman:2000,adamic:huberman:2002}, \cite{park:willinger:2000}).
 In fact, heavy tailed behavior is a characteristic of highly optimized physical systems, as argued in
 \cite{carlson:doyle:1999}.
 
Heavy tails are also ubiquitous in stock market data. It is well--documented that the returns of many stocks measured
at high--frequency exhibit non--negligible extreme fluctuations, consistent with a non--Gaussian, heavy--tailed
model. The availability of high--frequency tic-by-tic data reveals further pronounced presence of heavy tails in
the transaction volumes. Figure \ref{fig:first-data} shows the volumes associated with all single transactions
of the Honeywell Inc.\ stocks recorded during January 4th, 2005 at the New York Stock Exchange (NYSE) and NASDAQ (see,
\cite{wharton.data.service}).  The transactions are ordered by their occurrence in time.  The presence of large
spikes indicates heavy tails, 
similar, for example, to the moving average with Pareto innovations shown in Figure \ref{fig:first-data-example} below.

Some important features of such data are: (i) their large size due to the fine time scale resolution (high--frequency)
at which they are collected (ii) their temporal structure that introduces dependence amongst observations, and 
(iii) their sequential nature, since observations are added to the data set over time. Traditional methods for
estimating the tail index are not well suited for addressing these issues, as discussed below.
\begin{figure}[t!]
\begin{center}
\includegraphics[width=4.5in]{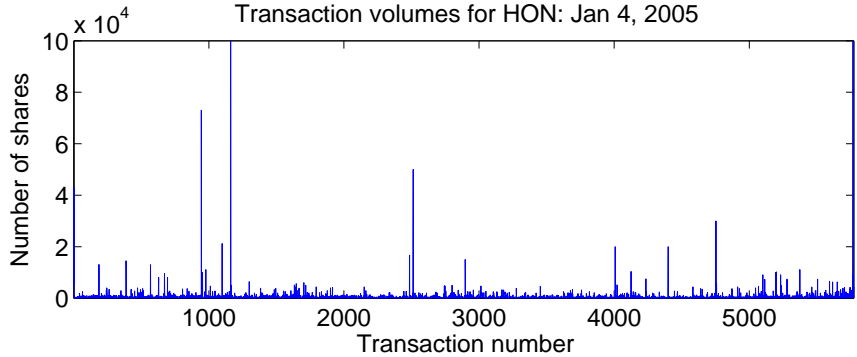}
{\caption{\label{fig:first-data} \small  
Transaction volumes for {\it Honeywell Inc.\ }(HON) from the NYSE and NASDAQ consolidated trades and quotes data base during January 4th, 2005.
The observations correspond to the volumes (in a number shares) per single transaction.  The transactions are listed in
the order of their occurrence in time.  The observed heavy--tailed behavior of traded volumes is ubiquitous across
different trading days and across the entire spectrum of relatively liquid stocks.}}
\end{center}
\end{figure}

The majority of the approaches proposed in the literature focuses on the scaling behavior 
of the largest order statistics $X_{(1)} \ge X_{(2)} \ge \cdots \ge X_ {(n)}$ obtained from
an in dependent and identically distributed (i.i.d.) sample $X(1),\ldots,X(n)$ from $F$; typical examples include
Hill's estimator \cite{hill:1975} and its numerous variations (\cite{kratz:resnick:1996},
\cite{resnick:starica:1997}), kernel based estimators (\cite{csorgo:deheuvels:mason:1985} and
\cite{feuerverger:hall:1999}). A review of these methods and their applications is given in
\cite{dehaan:drees:resnick:2000} and \cite{desousa:michailidis:2004}). 
 The most widely used in practice is the Hill estimator $\what \alpha_H(k)$ defined as:
\beq\label{e:alpha-k-Y}
 \what \alpha_H(k) := {\Big(} \frac{1}{k} \sum_{i=1}^k \ln X_{(i)} - \ln X_{(k+1)} {\Big)}^{-1},
\eeq 
with $k, 1\le k \le n-1$ being the number of included order statistics.  The parameter $k$ is
typically selected by examining the plot of the $\what \alpha_H(k)$'s versus $k$, known as the
{\it Hill plot}. In practice, one chooses a value of $k$ where the Hill plot exhibits a fairly constant
behavior (see e.g.\ \cite{dehaan:drees:resnick:2000}).
However, the use of order statistics requires
{\em sorting} the data that is computationally expensive (requires at 
least  $\mathcal{O}(n\log(n))$ steps) and destroys the time ordering of the data and
hence their temporal structure. Further, as can be seen from the brief
review above, most of the emphasis has been placed on point estimation of the tail index and
little on constructing confidence intervals. Exceptions can be found in the work 
of \cite{cheng:peng:2001} and \cite{lu:peng:2002} for the construction of confidence intervals and of 
\cite{resnick:starica:1995} on the estimation of $\alpha$ for dependent data.

\medskip
The purpose of this study is to introduce a method for estimating the tail index that overcomes the 
above listed shortcomings of other techniques. It is based on the asymptotic 
{\em max self--similarity} properties of heavy--tailed maxima. Specifically, 
the maximum values of data calculated over 
blocks of size $m$, scale at a rate of $m^{1/\alpha}$. Therefore, by examining a sequence
of growing, dyadic block sizes $m = 2^j,\ 1\le j \le \log_2 n,\ j\in\bbN$, and subsequently estimating the mean of
logarithms of block--maxima (log--block--maxima) one obtains an estimate of the tail index $\alpha$.
Notice that by using blocks of data, the temporal structure of the data is preserved. This procedure requires
$\mathcal{O}(n)$ operations, making it particularly useful for large data sets; further, the estimates for
$\alpha$ can be updated recursively as new data become available, by using only ${\cal O}(\log_2 n)$ memory
and without the knowledge of the entire data set, thus making the proposed estimator particularly suitable
for streaming data. Estimators based on max--self similarity for the tail index for i.i.d.\ data 
were introduced in \cite{stoev:michailidis:taqqu:2006}, where their consistency and 
asymptotic normality was established. In this paper, we extend them to dependent data, prove their
consistency, examine and illustrate their performance using synthetic and real data sets and discuss a
 number of implementation issues.

The remainder of the paper is structured as follows: in Section \ref{s:estimators_intro} 
the max--spectrum estimators are introduced.
Their consistency and asymptotic normality is established in Section \ref{s:an}, for $m-$dependent processes.
The distributional consistency of the estimators is established in Section \ref{s:d-cons} for a large 
class of time series models (including linear processes) under a mild asymptotic independence condition.
The construction of confidence intervals is further addressed in Section \ref{s:conf_int}.  
The important problem of automatic selection of parameters is addressed in Section \ref{s:cut-off}.
Applications to financial time series are discussed in Section \ref{s:data}, while most technical proofs are given in the Appendix.

\section{Max self--similarity and tail exponent estimators}
 \label{s:estimators_intro}

 Here we introduce the max self--similarity estimators for the tail exponent
 and demonstrate several of their characteristics. We start by reviewing the basic ideas for the case of
 independent and identically distributed  (i.i.d.) data. A detailed exposition is given in \cite{stoev:michailidis:taqqu:2006}.

Consider the sequence of block--maxima
$$
   X_m(k) := \max_{1\le i \le m}X(m(k-1)+i) = \bigvee_{i=1}^m X(m(k-1)+i),\ \ k=1,2,\ldots,~~m\in\bbN,
$$
where $X_m(k)$ denotes the largest observation in the $k-$th block. By \refeq{F-tail} \& 
\refeq{Lx} and the Fisher--Tippett--Gnedenko Theorem,
\beq\label{e:X_m}
   {\Big\{}\frac{1}{m^{1/\alpha}} X_m(k){\Big\}}_{k\in\bbN} \stackrel{d}{\longrightarrow}
   {\Big\{} Z(k) {\Big\}}_{k\in\bbN},\ \ \ \mbox{ as }m\to\infty,
\eeq
where $\stackrel{d}{\to}$ denotes convergence of the finite--dimensional distributions, with
the $Z(k)$'s being independent copies of an $\alpha-$Fr\'echet random variable.  
A random variable $Z$ is said to be $\alpha-$Fr\'echet, $\alpha>0$, with scale coefficient $\sigma>0$, if
\beq\label{e:Z}
 \P\{ Z \le x \} = \left\{\begin{array}{ll}
 \exp\{- \sigma^\alpha x^{-\alpha}\} &,\ x>0\\
 0 &,\ x \le 0
 \end{array} \right.
\eeq
The Fr\'echet variable $Z$ is said to be standard if $\sigma=1$.

Thus, for large $m$'s the block--maxima $X_m(k)$'s behave like a sequence of i.i.d.\ $\alpha-$Fr\'echet variables,
which suggests the following:

\begin{definition}\label{d:max-ss}
 A sequence of random variables $X = \{X(k)\}_{k\in\bbN}$ is said to be max self--similar with self--similarity 
 parameter $H>0$, if for any $m>0$,
\beq\label{e:d:max-ss}
 {\{}\bigvee_{i=1}^{m} X(m(k-1)+i) {\}}_{k\in\bbN} \stackrel {d}{=} {\{}m^H X(k){\}}_{k\in\bbN},
\eeq
 with $=^d$ denoting equality of the finite--dimensional distributions.
\end{definition}

Relationship  \refeq{d:max-ss} holds asymptotically for i.i.d. data and exactly for Fr\'echet distributed
data. Hence, any sequence of i.i.d.\  heavy--tailed variables can be regarded as {\it asymptotically max self--similar}
with self--similarity parameter $H=1/\alpha$.  This feature suggests that an estimator of $H$ and consequently $\alpha$ 
can be obtained by focusing on the scaling of the maximum values in blocks of growing size.  A similar
idea applied to block--wise sums was used in \cite{crovella:taqqu:1999} for estimating $\alpha$,
in the case $0<\alpha<2$.

\medskip
For  an i.i.d.\  sample $X(1),\ldots,X(n)$ from $F$, define
\beq\label{e:C-j}
  \D(j,k) := \max_{1\le i \le 2^j} X(2^j(k-1)+i) = \bigvee_{i=1}^{2^j} X(2^j(k-1)+i),\ \ k = 1,2,\ldots, n_j,
\eeq 
for all $j=1,2,\ldots,[\log_2 n],$  where $n_j := [n/2^j]$ and where $[x]$ denotes the largest integer not
greater than $x\in\bbR$. By analogy to the discrete wavelet transform, we refer to the parameter $j$ as the {\it scale} and
to $k$ as the {\it location} parameter.  We consider dyadic block--sizes because of their
algorithmic and computational advantages. Introduce the statistics
\beq\label{e:Yj}
  Y_j := \frac{1}{n_j} \sum_{k=1}^{n_j} \log_2 \D(j,k),\ \ j=1,2,\ldots,[\log_2 n].
\eeq
The Law of Large Numbers implies that for fixed $j$, as $n_j\to\infty$, the $Y_j$'s are consistent 
and unbiased estimators of $\E Y_j = \E \log_2 \D(j,1)$, if finite (see Corollary 3.1 in
\cite{stoev:michailidis:taqqu:2006}). On the other hand, the asymptotic max self--similarity
\refeq{X_m} of $X$ and \refeq{C-j} suggest that under additional tail regularity conditions (see e.g.\ Proposition \ref{p:moments} below):
\beq\label{e:Yj-sim}
  \E Y_j = \E \log_2 \D(j,1) \simeq j H + C \equiv j/\alpha + C, \ \ \ \mbox{ as }j\to\infty,
\eeq
where $C = C(\sigma_0,\alpha) = \E \log_2 \sigma_0 Z$, and where $\simeq$ means that the difference 
between the left-- and the right--hand side tends to zero, with $Z$ being an $\alpha-$Fr\'echet 
variable with unit scale coefficient. 

Then, a regression--based estimator of $H = 1/\alpha$ (and hence $\alpha$)
for a range of scales $1\le j_1\le j \le j_2 \le [\log_2 n]$ is given by:
\beq\label{e:H-hat}
 \what H(j_1,j_2) := \sum_{j=j_1}^{j_2} w_j Y_j,\ \ \mbox{ and }\ \ \what \alpha (j_1,j_2) := 1/\what H(j_1,j_2),
\eeq
where the weights $w_j$'s are chosen so that $\sum_{j=j_1}^{j_2} w_j = 0$ and $\sum_{j=j_1}^{j_2} j w_j = 1$.
The optimal weights $w_j$'s can be calculated through {\it generalized least squares} (GLS) regression using the
asymptotic covariance matrix of the $Y_j$'s. In practice, it is important to at least use {\it weighted least squares}
(WLS) regression to account for the difference in the variances of the $Y_j$'s (see, \cite{stoev:michailidis:taqqu:2006}).
\begin{figure}[h!]
\begin{center}
\includegraphics[width=4.5in,height=3in]{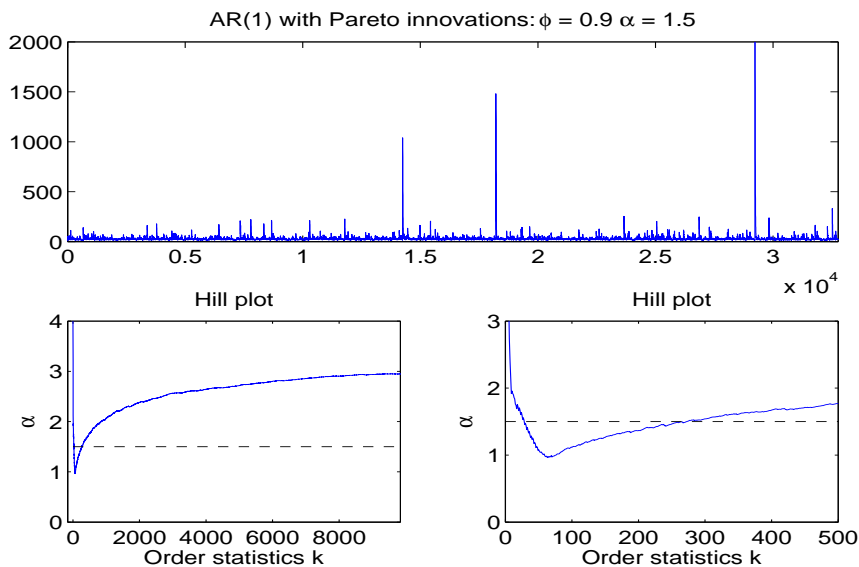}
{\caption{\label{fig:first-data-example} \small
{\it Top panel}: auto--regressive time series of order 1 with Pareto innovations of tail exponent $\alpha=1.5$.
{\it Bottom left and right panels}: the Hill plot for this data set and its zoomed--in version, respectively.
 The dashed horizontal line indicates the value of $\alpha = 1.5$.
}}
\end{center}
\end{figure}

\begin{figure}[h!]
\begin{center}
\includegraphics[height=2.5in,width=2.5in]{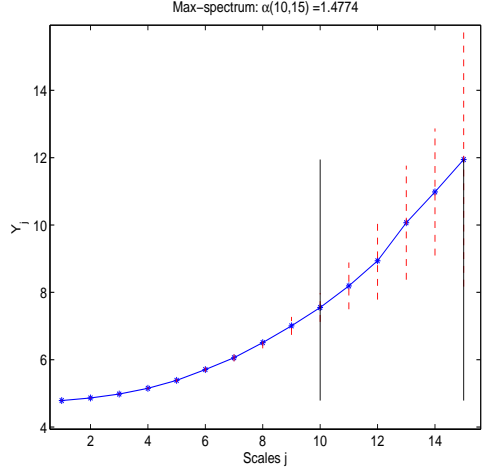}
{\caption{\label{fig:first-data-example-mspec} \small
The max--spectrum plot for the data set in Figure \ref{fig:first-data-example}.  The max self--similarity
estimator of the tail exponent, obtained from the range of scales $(j_1,j_2) = (10,15)$, is $\what \alpha(10,15) = 1.4774$.
}}
\end{center}
\end{figure}

\medskip
{\it We propose to use the estimator defined in \refeq{H-hat} for dependent time series data.}
We first illustrate its usage through a simulated data example. A data set of size $n = 2^{15} = 32,768$ was generated
from an auto--regressive time series of order one with Pareto innovations. Specifically,
$$
  X(k) = \phi X(k-1) + Z(k) = \sum_{i=0}^\infty \phi^i Z(k-i),\ \ k=1,\ldots, n,
$$
where $\phi = 0.9$ and $\P\{Z(k)>x\} = x^{-\alpha},\ x>1$, with $\alpha = 1.5$.  The data together with
its Hill plot are shown in Figure \ref{fig:first-data-example}.  Notice that even though the Hill estimator
work best for Pareto data, the dependence structure in the model leads to a Hill plot, which is substantially different 
from that for independent Pareto data (see the bottom left panel).  The zoomed--in version of the Hill plot (bottom right 
panel) however indicates that the tail exponent should be in the range between $1$ and $2$.  The choices of $k$ in 
the range between $200$ and $400$ do in fact lead to estimates around $1.5$.  This range however is hard to guess 
if one did not know the true value of $\alpha = 1.5$. 
\cite{resnick:starica:1997} have shown that the Hill estimator is consistent for such 
dependent data sets. Nevertheless, as this example indicates, the Hill plot can be difficult to
assess in practice.  

In Figure \ref{fig:first-data-example-mspec}, the {\it max--spectrum} plot is shown; i.e.
the plot of the statistics $Y_j$ versus the available dyadic scales $j,~1\leq j\leq [\log_2 n] (=15)$. 
The estimated tail exponent over the range of scales $(10,15)$ is
$1.4774$, which is very close to the nominal value of $\alpha= 1.5$.  Moreover, the max--spectrum
is easy to assess and interpret.  One sees a ``knee'' in the plot near scale $j=10$, where the 
max--spectrum curves upwards and thus it is natural to choose the range of scales $(10,15)$ to 
estimate $\alpha$.  The choice of the scales $(j_1,j_2)$ can be also automated, as briefly discussed in 
Section \ref{s:cut-off} below.

\medskip
\noi{\bf Remark:} {\it (on the algorithmic implementation)}
 The max--spectrum $Y_j,\ j=1,\ldots,[\log_2 n]$ of a data set $X_1,\ldots, X_n$ can be computed efficiently 
in ${\cal O}(n)$ steps, without sorting the data. Indeed, this is evident from the recursive
construction of block maxima, since 
$$
  D(j+1,k) = \max \{ D(j,2k-1), D(j, 2k)\},\ \ k=1,\ldots,[n/2^{j+1}],\ \ 1\le j <[ \log_2 n].
$$
Moreover, this property can be further used to obtain a {\it sequential algorithm} for the computation of the
$Y_j$'s.  Indeed, keep in addition to the $Y_j$'s, the last block--maximum $D_j:= D(j,n_j),\ n_j = [n/2^j]$ per scale $j$,
and also the extra variables  $R_j = \max_{2^j n_j < i \le n} X(i)$, which represent the  maxima of
the  'left--over' $X_i$'s over the range  $ 2^j [n/2^j] <i \le n$.  Now, if a new observation $X_{n+1}$
is recorded, one can easily update the $R_j$'s and the $Y_j$'s, with the help of the $R_j$'s, and 
the $D_{j'}:= D(j',n_{j'})$'s, for $1\le j'< j$.  Thus, one recovers the $(Y_j, D_j, R_j)_{j}-$representation of the data
$X_1,\ldots, X_{n+1}$.  Since only $\log_2 n$ scales are available, we perform ${\cal O}(\log_2 n)$ operations per 
update and use ${\cal O}(\log_2 n)$ memory to store the max--spectrum and the auxiliary data.

This sequential implementation of the max--spectrum is of critical importance in the context of 
data streams in modern data bases or Internet traffic applications.  In such settings, large volumes
of data are observed in short amounts of time; they cannot be stored and/or sorted efficiently while at
the same time rapid 'queries' need to be answered about various statistics of the data.  The proposed
max--spectrum estimator provides a unique tool for the estimation of the tail--exponent of such data.
Notice that the other available techniques require sorting the data which is impossible without having to store
the entire data set. A sequential implementation of the Hill estimator for example would require ${\cal O}(n)$ memory,
which is prohibitive in many applications.

\section{Asymptotic properties}

\subsection{Asymptotic Normality  (in the intermediate scales regime) }
\label{s:an}

 The estimators $\what H $ and $\what \alpha = 1/\what H$ in \refeq{H-hat} utilize the scaling
 properties of the max--spectrum statistics $Y_j$ in \refeq{Yj}.
 The discussion in Section \ref{s:estimators_intro} suggests that the max self--similarity estimators 
 in \refeq{H-hat} will be consistent as both the scale $j$ and $n_j$ tend to infinity.  The consistency and
 asymptotic normality of these estimators was established in \cite{stoev:michailidis:taqqu:2006} for i.i.d.\ data.
 This was accomplished by assessing the rate of convergence of moment type functionals of block--maxima, such as 
 $\E \log_2 D(j,1)$, under mild conditions on the rate of the tail decay in \refeq{F-tail}.
 Here, we focus on the case of dependent data and establish the asymptotic normality of the proposed
 max self--similarity estimators under analogous conditions on the rate.

\medskip
 Consider a strictly stationary process (time series)  $X= \{X(k)\}_{k\in\bbZ}$ with heavy--tailed
 marginal c.d.f.\ $F$ as in \refeq{F-tail} \& \refeq{Lx}. Further, assume that the $X(i)$'s are positive, almost surely,
 that is, $F(0)=0$.  In many contexts, the block--maxima of $X$ scale at a rate $m^{1/\alpha}$ as the
 block size $m$ grows even under the presence of strong dependence.  This is so, for example, when the
 time series $X$ has a {\it positive extremal index} (see, p.\ 53 in \cite{leadbetter:lindgren:rootzen:1983}).
 The following conditions make this more precise by quantifying further the rate of convergence.

 Let $M_n := \max_{1\le k\le n} X(k)$ and let
$$
 F_n(x)\equiv \P \{  M_n/n^{1/\alpha} \le x\} =: \exp\{ -c(n,x) x^{-\alpha}\},\ \ \ x >0,\ \ n\in\bbN. 
$$ 
One can see that 
$M_n/n^{1/\alpha}\stackrel{d}{\longrightarrow} Z, \ n\to\infty$ if and only if  
$c(n,x) \to c_X \equiv \const,\ n\to\infty,$ for all $x>0$, where $Z$ is an $\alpha-$Fr\'echet variable with
scale $\sigma = c_X^{1/\alpha}$.
The following conditions will help us quantify the rate of the last convergence and also obtain
rates of convergence for moment functionals of block--maxima in Proposition \ref{p:log-moments} below.

\medskip
\begin{condition}\label{cond:C1} There exists $\beta>0$ and $R \in \bbR$, such that
\begin{equation} \label{e:C1}
|c(n,x) - c_X| \le c_1(x) n^{-\beta},\ \ \mbox{ for all } x >0,\ \ \ \ \mbox{ and }\ \ \ c_1(x) = {\cal O}(x^{-R}),\ x\downarrow 0,
\end{equation}
for some $c_X>0$.
\end{condition}

\medskip
\begin{condition}\label{cond:C2} For all $x>0$, we have
\begin{equation} \label{e:C2}
 c(n,x) \ge c_2 \min\{1,x^\gamma\},\ \ \mbox{ for some }\gamma\in (0,\alpha),
\end{equation}
for all sufficiently large $n \in \mathbb{N}$, where $c_2>0$ does not depend on $n$.
\end{condition}

\medskip
\noi{\bf Remarks}
\begin{enumerate}
 \item [1.] Conditions \ref{cond:C1} and \ref{cond:C2} are not very restrictive.  They can be
 shown to hold, for example, for a large class of 
moving maxima processes (see, Proposition \ref{p:m_max}).

 \item [2.] Condition \ref{cond:C1} implies in particular that $M_n/n^{1/\alpha} \stackrel{d}{\to} c_X^{1/\alpha} Z,$ as $n\to\infty$,
 for a standard $\alpha-$Fr\'echet variable $Z$.  In view of \refeq{F-tail}, we also have that $M_{n}^*/n^{1/\alpha}  \stackrel{d}{\to} 
 \sigma_0 Z$, as $n\to\infty$, with $M_n^* := \max_{1\le k\le n} X(k)^*$, where the $X(k)^*$'s are i.i.d.\ random variables with c.d.f.\ $F$.  This implies
 that the {\it extremal index} $\theta$ of the time series $X$ is: $\theta:= c_X/\sigma_0^\alpha$ (see, e.g.\ 
 p.\ 53 in \cite{leadbetter:lindgren:rootzen:1983}).
\end{enumerate}

\noi Conditions \ref{cond:C1} \& \ref{cond:C2}, yield the following
important result on the rate of convergence of log--block maxima, similar to Corollary 3.1 in
\cite{stoev:michailidis:taqqu:2006}.

\begin{proposition} \label{p:log-moments} Let $X = \{X(k)\}_{k \in \mathbb{Z}}$ be a strictly stationary time
 series which satisfies Conditions \ref{cond:C1} \& \ref{cond:C2}. Suppose that $\int_1^\infty c_1(x) x^{-\alpha-1 + \delta} dx <\infty$,
 for some $\delta>0$.
 Then, with $M_n := \max_{1\le k\le n} X_k$, we have $\E |\ln(M_n)|^p <\infty$,  for all $p>0$ and
 all sufficiently large $n\in\bbN$. Moreover, for any $p>0$ and $k\in\bbN$, we have:
$$
 {\Big|} \E |\ln(M_n/n^{1/\alpha})|^p - \E |\ln(Z)|^p {\Big|} = {\cal O}(n^{-\beta}),\ \ \mbox{ and }\ \
 {\Big|} \E (\ln(M_n/n^{1/\alpha}))^k - \E (\ln(Z))^k {\Big|} = {\cal O}(n^{-\beta}),
$$
as $n\to\infty$, where $Z$ is an $\alpha-$Fr\'echet random variable with scale coefficient $\sigma = c_X^{1/\alpha}$.
\end{proposition}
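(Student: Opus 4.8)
The plan is to carry out all estimates for the normalized maximum $W_n:=M_n/n^{1/\alpha}$, whose distribution function on the logarithmic scale is $G_n(y):=\P\{\ln W_n\le y\}=\exp\{-c(n,e^y)e^{-\alpha y}\}$, $y\in\bbR$, and to compare its log--moments with those of $\ln Z$, where $Z$ is $\alpha-$Fr\'echet with scale $c_X^{1/\alpha}$, so that $\P\{\ln Z\le y\}=G(y):=\exp\{-c_X e^{-\alpha y}\}$. Since $\ln M_n=\ln W_n+\alpha^{-1}\ln n$ and $(a+b)^p\le 2^p(a^p+b^p)$, finiteness of $\E|\ln M_n|^p$ is equivalent to that of $\E|\ln W_n|^p$, so it suffices to work with the latter. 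I will express each moment through a layer--cake integral: for any exponent $q>0$,
\[
\E[(\ln W_n)_+^q]=q\int_0^\infty y^{q-1}(1-G_n(y))\,dy,\qquad \E[(\ln W_n)_-^q]=q\int_0^\infty y^{q-1}G_n(-y)\,dy,
\]
and analogously for $Z$ with $G$ in place of $G_n$; any atoms of $F$ form a Lebesgue--null set of $y$ and do not affect these integrals. The identities $|y|^p=y_+^p+y_-^p$ and $y^k=y_+^k+(-1)^ky_-^k$ then reduce both assertions to one--sided estimates of $1-G_n(\cdot)$, $G_n(-\,\cdot)$ and of $G_n-G$.

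For the finiteness claim, on the upper tail ($y>0$) Condition~\ref{cond:C1} gives $1-G_n(y)\le c(n,e^y)e^{-\alpha y}\le(c_X+c_1(e^y)n^{-\beta})e^{-\alpha y}$; the substitution $x=e^y$ turns the contribution of the $c_1$--term into a constant multiple of $n^{-\beta}\int_1^\infty(\ln x)^{q-1}c_1(x)x^{-\alpha-1}dx$, finite because $(\ln x)^{q-1}={\cal O}(x^\delta)$ and $\int_1^\infty c_1(x)x^{-\alpha-1+\delta}dx<\infty$ by hypothesis, while the $c_X$--term is trivially integrable against $y^{q-1}$. On the lower tail ($y<0$), Condition~\ref{cond:C2} gives $c(n,e^y)\ge c_2 e^{\gamma y}$ for all sufficiently large $n$, whence $G_n(y)\le\exp\{-c_2 e^{-(\alpha-\gamma)y}\}$, which decays double--exponentially as $y\to-\infty$ since $\alpha-\gamma>0$; thus $\int_{-\infty}^0|y|^{q-1}G_n(y)\,dy<\infty$. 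Hence $\E|\ln W_n|^p<\infty$ for every $p>0$ and every sufficiently large $n$, and the same bounds hold for $Z$.

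For the rate I estimate $|G_n(y)-G(y)|$ on the two half--lines. On $(0,\infty)$, with $a=c(n,e^y)e^{-\alpha y}$, $b=c_X e^{-\alpha y}$, the elementary bound $|e^{-a}-e^{-b}|\le|a-b|$ and Condition~\ref{cond:C1} give $|G_n(y)-G(y)|\le c_1(e^y)e^{-\alpha y}n^{-\beta}$; integrating against $qy^{q-1}$ and substituting $x=e^y$ yields, exactly as above, a quantity that is ${\cal O}(n^{-\beta})$. On $(-\infty,0)$ this step fails because $c_1(e^y)={\cal O}(e^{-Ry})$ may grow exponentially as $y\to-\infty$, so I combine both conditions via the sharper inequality $|e^{-a}-e^{-b}|\le|a-b|e^{-\min\{a,b\}}$: for $y<0$, Condition~\ref{cond:C2} gives $a\ge c_2 e^{-(\alpha-\gamma)y}$, and $b\ge c_X e^{-(\alpha-\gamma)y}$ because $e^{-\alpha y}\ge e^{-(\alpha-\gamma)y}$ when $y\le 0$, so $\min\{a,b\}\ge c' e^{-(\alpha-\gamma)y}$ with $c'=\min\{c_2,c_X\}$, while $|a-b|\le c_1(e^y)e^{-\alpha y}n^{-\beta}$; therefore
\[
|G_n(y)-G(y)|\le c_1(e^y)e^{-\alpha y}\exp\{-c'e^{-(\alpha-\gamma)y}\}\,n^{-\beta}\qquad (y<0).
\]
The double--exponential factor dominates the at most exponential growth of $c_1(e^y)e^{-\alpha y}$, so that $\int_{-\infty}^0|y|^{q-1}|G_n(y)-G(y)|\,dy={\cal O}(n^{-\beta})$ as well. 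Adding the two one--sided bounds and using the decompositions of $|y|^p$ and $y^k$ delivers both claimed rates.

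The main obstacle is precisely the lower--tail rate estimate: Condition~\ref{cond:C1} alone is useless there, and one must invoke Condition~\ref{cond:C2} to manufacture a double--exponential decay that absorbs the possible exponential blow--up of $c_1$, the key device being the refined inequality $|e^{-a}-e^{-b}|\le|a-b|e^{-\min\{a,b\}}$. The remaining ingredients are routine: the layer--cake representations, the substitution $x=e^y$, absorbing logarithmic factors into $x^\delta$, and the observation that the hypothesis $\int_1^\infty c_1(x)x^{-\alpha-1+\delta}dx<\infty$ is what controls the upper tail, both for the finiteness statement and for the rate.
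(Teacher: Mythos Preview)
Your argument is correct and uses the same essential ingredients as the paper: Condition~\ref{cond:C1} to control the rate via $|c(n,x)-c_X|\le c_1(x)n^{-\beta}$, Condition~\ref{cond:C2} to generate a double--exponential factor that kills the possible blow--up of $c_1$ near the origin, the refined bound $|e^{-a}-e^{-b}|\le |a-b|e^{-\min\{a,b\}}$, and the absorption $(\ln x)^{q-1}={\cal O}(x^\delta)$ to reduce to the integrability hypothesis on $c_1$.

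The organization, however, differs from the paper's. The paper does not argue directly: it first proves a general result (Proposition~\ref{p:moments}) of the form $|\E f(M_n/n^{1/\alpha})-\E f(Z)|\le C_f n^{-\beta}$ for any absolutely continuous $f$ satisfying growth conditions at $0$ and $\infty$, via the integration--by--parts identity $\E f(W_n)-\E f(Z)=\int_0^\infty (G(x)-F_n(x))f'(x)\,dx$, and then simply checks that $f(x)=|\ln x|^p$ and $f(x)=(\ln x)^k$ meet those hypotheses. Your route is the same computation specialized to these $f$'s from the start, carried out on the logarithmic scale $y=\ln x$ with the layer--cake formula in place of integration by parts; the substitution $x=e^y$ maps one into the other. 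The paper's packaging has the advantage of yielding a reusable tool for other moment functionals; your direct approach is self--contained and avoids stating and verifying the abstract growth conditions \eqref{e:f-cond-1}--\eqref{e:f-cond-2}.
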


\noi The proof is given in Section \ref{s:proofs}. Proposition \ref{p:log-moments} readily implies:
\begin{equation}\label{e:E-Yj-rate}
 \E (Y_j - j/\alpha ) \equiv \E \log_2( D(j,k)/2^{j/\alpha} )    = \E \log_2( c_X^{1/\alpha} Z_1 ) + {\cal O}(1/2^{j\beta}),
\end{equation}
as $j\to\infty$, where $Z_1$ is a standard $\alpha-$Fr\'echet variable.
This result yields an asymptotic bound on the bias of the estimators $\what H(j_1,j_2)$ in \refeq{H-hat} above.
 
Proposition \ref{p:log-moments} can be further used to establish the asymptotic normality of $\what \alpha(j_1,j_2)
= 1/\what H(j_1,j_2)$ in \refeq{H-hat}. To do so, we focus on a range of scales
$(j_1,j_2)$ which grows with the sample size.  Namely, we fix $\ell\in \bbN,\ \ell \ge 2$, let
$j_1:= 1+ j(n)$ \& $j_2:= \ell + j(n)$, and as in \refeq{H-hat} define:
$$
 \what H_n := \sum_{i=1}^{\ell} w_i Y_{i+j(n)}\ \ \ \mbox{ and } \ \ \ \what \alpha_n := 1/\what H_n,
$$
where $\sum_{i=0}^{\ell-1} i^{\kappa} w_i  = \kappa,\ \kappa = 0,1$.
The next theorem is the main result of this section.  It establishes the asymptotic normality of
the estimator $\what \alpha_n$, as $j(n)$ and $n$ tend to infinity.

\begin{theorem}\label{t:main} Let $X_1,\ldots,X_n$ be a sample from an $m-$dependent heavy--tailed process 
$X = \{X(k)\}_{k\in\bbZ}$. Suppose that \refeq{C1} and \refeq{C2} hold and let $j=j(n) \in\bbN$ be such that
\beq\label{e:j(n)}
  2^{j(n)}/n + n/2^{j(n)(1+2\min\{1,\beta\})}  \longrightarrow 0,\ \ \ \mbox{ as }n\to\infty.
\eeq
Then, as $n\to\infty$,
\beq\label{e:t:main}
 \sqrt{n_j} (\what \alpha_n - \alpha) \stackrel{d}{\longrightarrow} {\cal N}(0,\alpha^2 c_w),\ \  \ \mbox{ with }
  c_w = \sum_{i',i''=1}^{\ell}w_{i'} w_{i''} \Sigma_1(i',i''),
\eeq
where $n_j= [n/2^j]$. Here $\Sigma_1(i',i'') = 2^{\min\{i',i''\}} {\rm Cov}(\log_2 Z_1, \log_2(Z_1 \vee (2^{|i'-i''|}-1)Z_2)),$
for $i',i''=1,\ldots,\ell$, where $Z_1$ and $Z_2$ are independent standard $1-$Fr\'echet random variables.
\end{theorem}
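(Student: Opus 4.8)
The plan is to deduce \eqref{e:t:main} from a joint central limit theorem for $(Y_{j+1},\dots,Y_{j+\ell})$ via the delta method, after peeling off a deterministic bias that \eqref{e:j(n)} renders negligible. Write $H=1/\alpha$. Since $x\mapsto 1/x$ is $C^1$ near $H>0$ with derivative $-\alpha^2$, once we show $\what H_n\toP H$ and $\sqrt{n_j}(\what H_n-H)\toD{\cal N}(0,c_w/\alpha^2)$, the claim \eqref{e:t:main} follows from the delta method and Slutsky's lemma (the Jacobian factor $(-\alpha^2)^2$ converting $c_w/\alpha^2$ into $\alpha^2 c_w$). Decompose $\what H_n-H=V_n+b_n$ with $V_n:=\sum_{i=1}^\ell w_i(Y_{i+j}-\E Y_{i+j})$ and $b_n:=\sum_{i=1}^\ell w_i\,\E Y_{i+j}-H$. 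By Proposition~\ref{p:log-moments}, in the form \eqref{e:E-Yj-rate} (and controlling in the same way the residual ${\cal O}(2^{-(i+j)})$ discrepancy between the moments of a maximum of $2^{i+j}$ $m$-dependent variables and those of its Fr\'echet limit), $\E Y_{i+j}=(i+j)/\alpha+C+{\cal O}(2^{-(i+j)\min\{1,\beta\}})$. Since the weights are normalized so that $\sum_i w_i=0$ and $\sum_i i\,w_i=1$, the affine part contributes exactly $H$, so $b_n={\cal O}(2^{-j\min\{1,\beta\}})$ and $\sqrt{n_j}\,b_n={\cal O}\big(\sqrt{\,n\,2^{-j(1+2\min\{1,\beta\})}}\,\big)\to0$ by the second term in \eqref{e:j(n)}.

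For the stochastic part I would exploit the structure created by dyadic blocking and $m$-dependence. For $n$ large enough that $2^{j+1}>m$, the sequence $\{D(j+1,k)\}_k$ is strictly stationary and $1$-dependent, because blocks two apart involve $X$'s separated by a gap exceeding $m$; moreover, as all blocks sit on the dyadic grid, every coarser maximum is an exact maximum of finer ones, $D(j+i,q)=\bigvee_{r=1}^{2^{i-1}}D(j+1,(q-1)2^{i-1}+r)$, so each $Y_{i+j}$ is a functional of $\{D(j+1,k)\}_k$ alone. Grouping fine blocks into super-blocks of $2^{\ell-1}$ consecutive ones (precisely the scale-$(j+\ell)$ blocks, of which there are $N:=n_{j+\ell}$), one can write, up to a boundary remainder involving ${\cal O}(1)$ leftover blocks per scale, $V_n=\sum_{s=1}^{N}(\eta_s-\E\eta_s)$ with $\eta_s:=\sum_{i=1}^{\ell}\tfrac{w_i}{n_{i+j}}\sum_{t=0}^{2^{\ell-i}-1}\log_2 D(j+i,\,2^{\ell-i}(s-1)+t+1)$, and $\{\eta_s\}_{s=1}^N$ a strictly stationary $1$-dependent triangular array. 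The Fr\'echet tail bounds of Proposition~\ref{p:log-moments} give uniform control of $\E|\eta_1-\E\eta_1|^{2+\delta}$, so a Lyapunov/Bernstein-block CLT for $1$-dependent stationary arrays yields $\sqrt{n_j}\,V_n\toD{\cal N}(0,v)$ with $v=\lim_n n_j\var(V_n)$, the leftover remainder being ${\cal O}_P(2^{\ell}/\sqrt{n_j})\to0$; together with the bias step this also gives $\what H_n\toP H$.

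It then remains to identify $v=\lim_n n_j\var(\what H_n)=\sum_{i',i''}w_{i'}w_{i''}\lim_n n_j\cov(Y_{i'+j},Y_{i''+j})$. For fixed $i',i''$, stationarity turns $\cov(Y_{i'+j},Y_{i''+j})$ into a weighted count of covariances $\cov(\log_2 D(j+i',k'),\log_2 D(j+i'',k''))$ over overlapping blocks; using Condition~\ref{cond:C1}, the asymptotic independence of maxima over well-separated large blocks of an $m$-dependent series, and the nesting of the scale-$(j+i')$ blocks inside the scale-$(j+i'')$ ones, each such covariance converges as $j\to\infty$ to $\alpha^{-2}\cov\big(\log_2 Z_1,\log_2(Z_1\vee(2^{|i'-i''|}-1)Z_2)\big)$ with $Z_1,Z_2$ i.i.d.\ standard $1$-Fr\'echet (the factor $2^{|i'-i''|}-1$ being the number of ``other'' sub-blocks nested in the coarser one), while the number of overlapping pairs together with $n_{i+j}\approx n_j2^{-i}$ supplies the power-of-two prefactor and the $1$-dependence of $\{D(j+i,\cdot)\}$ and the Fr\'echet coupling error contribute only ${\cal O}(2^{-j}\,\mathrm{polylog})$, which does not survive. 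This gives $v=c_w/\alpha^2$ with $c_w$ as stated, completing the proof.

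I expect the CLT of the second paragraph to be the main obstacle: one needs a central limit theorem robust enough to accommodate block sizes $2^{j(n)}$ that grow with $n$ (so that the $\eta_s$ form a genuine triangular array whose dependence range and moment bounds must be controlled uniformly in $n$), and one must simultaneously show that every approximation error — the Fr\'echet coupling error, the ${\cal O}(1)$ leftover boundary blocks, and the vanishing consecutive-block covariances — is $o(n_j^{-1/2})$. The scaling hypothesis \eqref{e:j(n)}, and in particular the $\min\{1,\beta\}$ in its second term, is exactly what kills the bias, while the first term $2^{j}/n\to0$ guarantees $n_j\to\infty$ so that the CLT applies at all.
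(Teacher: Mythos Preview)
Your proposal is essentially correct, but it takes a genuinely different route from the paper's.

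The paper does \emph{not} attempt a direct CLT for the $1$-dependent array $\{D(j+1,k)\}_k$. Instead, it introduces truncated block--maxima
$\wtilde D(j',k):=\bigvee_{i=1}^{2^{j'}-m}X(2^{j'}(k-1)+i)$, which by $m$-dependence are \emph{exactly independent} in $k$. The corresponding $\wtilde Y_{j'}$ and $\wtilde H_n$ then fall under the i.i.d.\ result (Theorem~4.1 of \cite{stoev:michailidis:taqqu:2006}), which immediately delivers both the CLT and the identification of the limiting covariance $\Sigma_1$. The remaining work is to show $\what H_n-\wtilde H_n=o_P(n_j^{-1/2})$; this is done by bounding the second moment via three short lemmas (variance of $Y_j-\wtilde Y_j$ controlled by a single--block variance; $D(j,k)/\wtilde D(j,k)\toP 1$; uniform integrability of $|\log_2(D/\wtilde D)|^p$) together with \eqref{e:E-Yj-rate}. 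The $\min\{1,\beta\}$ in \eqref{e:j(n)} appears precisely because this comparison produces an extra ${\cal O}(2^{-j})$ term from $\log_2(1-m/2^{j})$, on top of the ${\cal O}(2^{-j\beta})$ Fr\'echet--approximation error.

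Your route --- decompose into bias plus a centered $1$-dependent triangular array, invoke a Lyapunov CLT for $1$-dependent stationary arrays, and compute $\lim n_j\var(\what H_n)$ from scratch --- is sound and more self--contained, but it obliges you to re-derive what the paper obtains for free from the i.i.d.\ case: the nested--block covariance limits and the $2^{\min\{i',i''\}}$ combinatorics, plus the (true but not entirely obvious) fact that the lag--$1$ covariances of $\{\log_2 D(j+1,k)\}_k$ vanish as $j\to\infty$ so that the asymptotic variance matches the independent case. The paper's truncation trick sidesteps all of this. Conversely, your approach has the minor advantage that the bias is only ${\cal O}(2^{-j\beta})$ (no truncation, hence no extra ${\cal O}(2^{-j})$ term), so in principle a slightly weaker growth condition than \eqref{e:j(n)} would suffice when $\beta>1$.
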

\begin{proof} By the 'Delta--method' (see e.g.\ Theorem 3.1 in \cite{vaart:1998}), it suffices to show that
 \beq\label{e:main-1}
 \sqrt{n_j} (\what H_n - H) \stackrel{d}{\longrightarrow} {\cal N}(0, H^2 c_w),\ \ \ \mbox{ as }n\to\infty.
 \eeq
Indeed, since $\what\alpha_n = f(\what H_n),$ with $f(x) = 1/x$, we have
$\what \alpha_n - \alpha = - H^{-2} (\what H_n - H) + o_p(\what H_n - H),$ as $n\to\infty$.

Let now 
$$
 \wtilde D(j',k) := \bigvee_{i=1}^{2^{j'} - m} X(2^{j'}(k-1) + i) \ \ \ \mbox{ and }\ \ \ \wtilde Y_{j'}:= \frac{1}{n_{j'}}\sum_{k=1}^{n_{j'}}
 \log_2 \wtilde D(j',k),  
$$
for all $j_1\le j' \le j_2$, and $k= 1,\ldots, n_{j'},$ where $n_{j'}:= [n/2^{j'}]$.  Observe that since the time series
$X$ is $m-$dependent, the $\wtilde D(j',k)$'s are now independent in $k$.  Hence, in view of Conditions \ref{cond:C1} \& \ref{cond:C2}
and Proposition \ref{p:log-moments}, the results of Theorem 4.1 in \cite{stoev:michailidis:taqqu:2006} readily apply to
the max--spectrum $\wtilde Y_{j'},\ j'=j_1,\ldots, j_2$, which is based on the independent $\wtilde D(j',k)$'s.
Therefore, by setting
$\wtilde H_n := \sum_{i=1}^{\ell} w_i \wtilde Y_{i+j(n)}$, we obtain:
\beq\label{e:main-2}
 \sqrt{n_j} (\wtilde H_n - H) \stackrel{d}{\longrightarrow} {\cal N}(0, H^2 c_w),\ \ \ \mbox{ as }n\to\infty.
\eeq
In view of \refeq{main-2}, to establish \refeq{main-1}, it is enough to show that $\what H_n - \wtilde H_n = o_p(1/\sqrt{n_j}),\ n\to\infty$,
or that, for example, 
\beq\label{e:main-3}
 \E (\what H_n -\wtilde H_n)^2 = {\rm Var}(\what H_n - \wtilde H_n) + (\E \what H_n - \E \wtilde H_n)^2 = o(1/n_j),\ \ \ 
 \mbox{ as }n\to\infty.
\eeq
{\it Consider first the term ${\rm Var}(\what H_n - \wtilde H_n)$.} Since 
$
 \what H_n - \wtilde H_n = \sum_{i=1}^{\ell} w_i (Y_{i+j(n)} - \wtilde Y_{i+j(n)}),
$
we have
$$
 {\rm Var}(\what  H_n - \wtilde H_n) \le C \sum_{i=1}^{\ell} {\rm Var}(Y_{i+j(n)} - \wtilde Y_{i+j(n)}) \le \frac{C'}{n_j} \sum_{i=1}^{\ell}
 {\rm Var} (\log_2 D(i+j(n),1) - \log_2 \wtilde D(i+j(n),1)),
$$
for some constants $C$ and $C'$, where the last inequality follows from Lemma \ref{l:m-dep-1}.
Now, Lemmas \ref{lemma:convergence_of_ratio_of_BM_to_degenrate_RV} and \ref{lemma:uniform_integrable_of_logratios} imply 
that ${\rm Var}(\log_2 D(i+j(n),1) - \log_2 \wtilde D(i+j(n),1))\to 0,\ n\to\infty$, and hence
${\rm  Var}(\what  H_n - \wtilde H_n) = o(1/n_j)$, as $n\to\infty$.

{\it Now, focus on the term $(\E \what H_n - \E \wtilde H_n)^2$ in \refeq{main-3}.} For some constant $C_\ell>0$, we have
\beq\label{e:main-4}
(\E \what H_n - \E \wtilde H_n)^2 \le C_\ell \sum_{i=1}^{\ell} (\E Y_{i+j(n)} - \E \wtilde Y_{i+j(n)} )^2 = C_\ell
 \sum_{i=1}^{\ell} {\Big(}\E \log_2 \frac{D(i+j(n),1)}{ \wtilde D(i+j(n),1)} {\Big )}^2,
\eeq
where we used the inequality $(\sum_{i=1}^\ell x_i)^2 \le \ell \sum_{i=1}^\ell x_i^2$ and the stationarity (in $k$) of the $D(i+j(n),k)$'s and
$\wtilde D(i+j(n),k)$'s.  We further have that
\beq\label{e:main-5}
 \E \log_2 \frac{D(i+j(n),1)}{\wtilde D(i+j(n),1)} = \E \log_2{\Big(}\frac{ D(i+j(n),1)}{ 2^{j(n)/\alpha}}{\Big)} -
 \E \log_2 {\Big(} \frac{\wtilde D(i+j(n),1)}{(2^{j(n)} -m)^{1/\alpha}} {\Big)} 
  - \frac{1}{\alpha} \log_2(1 - \frac{m}{2^{j(n)}}).
\eeq
Relation \refeq{E-Yj-rate} implies that the last two expectation are both equal to $\E \log_2(Z) + {\cal O}(1/2^{j(n)\beta}),$ as
$n\to\infty$, where $Z$ is an $\alpha-$Fr\'echet variable with scale $c_X^{1/\alpha}$.   Therefore, from \refeq{main-4} and \refeq{main-5},
we obtain
$$
(\E \what H_n - \E \wtilde H_n)^2 = {\cal O}(1/2^{2j(n)\beta }) + {\cal O}(1/2^{2j(n)})
 = {\cal O}( 1/2^{2j(n) \min\{1,\beta\}}),\ \  \ \mbox{ as }n\to\infty,
$$
where in the last relation we used that $\log_2(1-x) = {\cal O}(x),\ x\to 0$.

By combining the above derived bounds on the terms on the right--hand side of \refeq{main-3}, we 
get
$$
\what H_n - \wtilde H_n = o_P(1/\sqrt{n_j}) + o_p( 1/2^{j(n) \min\{1,\beta\}}) = o_p(1/\sqrt{n_j}),\ \ \ \mbox{ as }n\to\infty,
$$
where the last equality follows from \refeq{j(n)} since $n/2^{j(n)(1+2\min\{1,\beta\})} = n_j/2^{2j(n)\min\{1,\beta\}} \to 0,$
as $n\to\infty$. This implies \refeq{main-1} and completes the proof of the theorem. 
$\Box$
\end{proof}

\medskip

We conclude this section with several important remarks on the scope of validity of the asymptotic results in Theorem \ref{t:main}.

\medskip
\noi{\bf Remarks}
\begin{enumerate}
\item
 {\it (On the role of $\beta$)}  The parameter $\beta>0$ in Condition \ref{cond:C1} controls the rate of the convergence in distribution
of $M_n/n^{1/\alpha}$ to the $\alpha-$Fr\'echet limit law.  The larger the value of $\beta$, the faster the convergence in \refeq{C1}, and in view of \refeq{j(n)},
the wider the range of scales $j(n)$'s in Theorem \ref{t:main} that lead to asymptotically normal $\what \alpha_n$'s.  In particular, the larger the $\beta$, the
faster the convergence of the $\what \alpha_n$'s can be made, since one could choose relatively small $j(n)$'s.  

On the other hand, when the rate of convergence of the law of $M_n/n^{1/\alpha}$ to its limit is relatively slow, then the values of $\beta>0$ can be close to zero.
This can lead to arbitrarily slow rates of the convergence of $\what \alpha_n$ since one may have to choose relatively large scales $j(n)$'s to compensate for the
rate of the bias in the max--spectrum on smaller scales.

 \item {\it (On the connection with Hill estimators)}
 As argued in \cite{stoev:michailidis:taqqu:2006}, for the case of independent data, Condition \ref{cond:C1} corresponds precisely
 to the second--order condition used in \cite{hall:1982}, where the asymptotic normality of the Hill estimator was established.  The rates of 
 convergence in \refeq{t:main} above are, in the case of independent data, in close correspondence with the rates for the Hill estimator, obtained in
 \cite{hall:1982}. 

\item {\it (On data with regularly varying tails)} Consider the case when the $X(k)$'s satisfy \refeq{F-tail} where now the slowly varying function 
$L(\cdot)$ is {\it non--trivial}.  Then, the max--spectrum based estimators of $\alpha$ will continue to work.  Indeed,
for the case of i.i.d.\ data, we have that
$$
 \frac{M_n}{a_n} \stackrel{d}{\longrightarrow} Z,\ \ \mbox{ as }n\to\infty,
$$
where $Z$ is a standard $\alpha-$Fr\'echet variable, and where $a_n = n^{1/\alpha}\ell(n)$ is such that 
$$
 na_n^{-\alpha} L(a_n) \equiv \frac{1}{\ell(n)} L(n^{1/\alpha}\ell(n)) \longrightarrow 1,\ \ \mbox{ as }n\to\infty.
$$
Here $\ell(\cdot)$ is another slowly varying function related to $L$ (see e.g.\ Proposition 1.11 in \cite{resnick:1987}).

If one replaces $n^{1/\alpha}$ by $a_n\equiv n^{1/\alpha}\ell(n)$ and $c_X$ by $1$ in Conditions \ref{cond:C1} and \ref{cond:C2} then
Propositions \ref{p:log-moments} and \ref{p:moments} will continue to hold with $M_n/n^{1/\alpha}$ replaced by $M_n/a_n$. 
The proofs are essentially the the same.  In the case of independent data, one has that
\beq\label{e:Var}
 {\rm Var}(Y_j) = \frac{1}{n_j} {\rm Var}{\Big(} \log_2 D(j,1)/(2^{j/\alpha} \ell(2^{j})^{1/\alpha}) {\Big)} = \frac{1}{n_j} {\Big(} {\rm Var}(\log_2 Z) + o(1) {\Big)},
\eeq
where the remainder term $o(1)$ vanishes, as $j\to\infty$, because of the analog of Relation \refeq{E-Yj-rate}.  Also, by the counterpart of 
\refeq{E-Yj-rate}, one obtains:
\beq\label{e:mean}
 \E {\Big (} Y_j - j/\alpha - \log_2(\ell(2^{j}))/\alpha{\Big)}  = \E \log_2(Z) + {\cal O}(1/2^{j\beta}),\ \ \mbox{ as }j\to\infty.
\eeq
Consider now a fixed $m\in\bbN,\ m\ge 2$ and let
$$
 \what H_n = \sum_{i=1}^m w_i Y_{i+j(n)}.
$$
Relation \refeq{Var} implies that ${\rm Var}(\what H_n)\to 0$, as $n$ and $j(n)$ tend to infinity.  On the other hand, Relation \refeq{mean} shows that
\begin{eqnarray}\label{e:ell}
 \E \what H_n &=& \frac{1}{\alpha}\sum_{i=1}^m i w_i + {\cal O}(1/2^{j\beta}) + 
(j(n) + \E \log_2(Z))\sum_{i=1}^m w_i + \frac{1}{\alpha} \sum_{i=1}^m \log_2(\ell(2^{i+j}))\nonumber\\
              &=& \frac{1}{\alpha} + {\cal O}(1/2^{j\beta}) + \frac{1}{\alpha} \sum_{i=1}^m \log_2 {\Big(} \ell(2^{i}\cdot 2^{j(n)})/ \ell(2^{j(n)}) {\Big)},
\end{eqnarray} 
where in the last two relations we used the facts that $\sum_{i=1}^m i w_i =1$ and  $\sum_{i=1}^m w_i = 0$.

Now, the fact that $\ell(\cdot)$ is a slowly varying function, implies that $\ell(2^{i} \cdot 2^{j(n)})/\ell(2^{j(n)}) \to 1$, as $j(n)\to\infty$.
This shows that the right--hand side of \refeq{ell} converges to $H\equiv 1/\alpha$, as $j(n)\to \infty$ and hence the estimator $\what H_n$ is consistent,
as $n\to\infty$ and as $j(n)\to\infty$.  Note that the rate of the bias $(\E \what H_n - H)$ depends not only on the term ${\cal O}(1/2^{j\beta})$ but also 
on the rate of the convergence
 $$
 \ell (\lambda j)/\ell(j) \longrightarrow 1,\ \ \mbox{ as }j\to\infty.
 $$ 
This last rate depends on the structure of the slowly varying function $\ell(\cdot)$ and it may be possible to control in terms of the
Karamata's integral representation 
$$
 \ell(x) = c_\ell(x) \exp{\Big\{} -\int_{x_0}^x \epsilon(u)/u du{\Big\}},
$$
at the expense however, of two additional parameters controlling the rates of $c_\ell(\cdot)$ and $\epsilon(\cdot)$.

This argument shows the consistency of the max--spectrum based estimator of $\alpha$ for i.i.d.\ $X(k)$'s with regularly varying tails. 
In principle, one can establish asymptotic normality of these estimators along similar line,  but this would involve 
technically complicated assumptions on the slowly varying functions considered.  
Further, as in Theorem \ref{t:main} one can establish asymptotic normality results for the $\what \alpha_n$'s
for $m-$dependent data.  We chose not to pursue the general case of regularly varying tails here since the technical details may obscure the idea behind the estimator.  These important theoretical results will be pursued in subsequent work on the subject.

\end{enumerate}

\subsection{Distributional consistency (in the large scales regime) }
 \label{s:d-cons}

 In Theorem \ref{t:main}, we consider an asymptotic regime where the number of block--maxima $n_j$ on 
 the scale $j=j(n)$ grows, as $n\to\infty$.  This is essential for the consistency of the estimators $\what \alpha_n$.
 In practice, however, the situation where we have a fixed number of block--maxima per scale is 
 also of interest.  Namely, for a sample $X(1),\ldots,X(n)$ and a {\it fixed} number of block--maxima $r$, we
 let $j(n):= [\log_2(n/r)]$ and consider the estimator 
 \beq\label{e:alpha-n-ell}
  \what\alpha_n := \sum_{i=1}^{\ell} w_i Y_{i+j(n)},\ \ \ \mbox{ where }\ \ell = [\log_2 r].
 \eeq
 This estimator corresponds to taking the largest $\ell$ scales in the max--spectrum, where $\ell$ is {\it fixed}.
 One cannot expect the estimators $\what \alpha_n$ to be consistent (even for independent data) since they involve
 averages over a fixed number of block--maxima statistics. Nevertheless, the asymptotic distribution of $\hat\alpha_n$
 is of interest.

 The next result establishes the 'distributional consistency' of the estimators $\what\alpha_n$ in the aforementioned 
 regime.  We do so under the condition that the block--maxima in \refeq{X_m} are asymptotically independent.
 This condition is in fact quite mild, as shown in Lemmas \ref{l:lin-proc} and \ref{l:ai} below.

 \begin{theorem}\label{t:alpha_n-to-alpha_F}  Suppose that \refeq{X_m} holds where the $Z(k)$'s are i.i.d.\  $\alpha-$Fr\'echet.
  Then,
 \beq\label{e:p:alpha_n-to-alpha_F}
 \what \alpha_n \stackrel{d}{\longrightarrow} \what \alpha_Z,\ \ \ \mbox{ as }n\to\infty,
 \eeq
 where $\hat \alpha_F:= 1/ (\sum_{i=1}^{\ell} w_i Y_{i}^{Z})$, and where $\{Y_i^Z\}_{i=1}^{\ell}$ is 
 the max--spectrum of a sequence of i.i.d.\  $\alpha-$Fr\'echet variables $Z(1),\ldots,Z(r)$.
 \end{theorem}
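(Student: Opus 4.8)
The plan is to exhibit $\what\alpha_n$ as a \emph{fixed} continuous functional of a finite, fixed--length vector of renormalised block--maxima, and then to read the conclusion off directly from the finite--dimensional convergence \refeq{X_m} via the continuous mapping theorem. Unlike in Theorem \ref{t:main}, no uniform--integrability or rate--of--convergence estimates will be needed, because nothing here is centred by its mean. Throughout, write $\what H_n := \sum_{i=1}^{\ell} w_i Y_{i+j(n)}$, so that $\what\alpha_n = 1/\what H_n$, and recall $j(n) = [\log_2(n/r)]\to\infty$ and $\ell = [\log_2 r]$.

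\emph{Step 1 (reduction to a single anchor scale).} First I would anchor every scale occurring in $\what H_n$ at the single scale $j(n)$. Since $2^{i+j(n)}$ is an integer multiple of $2^{j(n)}$, the block--maxima recursion holds \emph{exactly}, $D(i+j(n),k)=\bigvee\{D(j(n),l):(k-1)2^i<l\le k2^i\}$, and the elementary identity $[[x]/m]=[x/m]$ ($x\ge 0$, $m\in\bbN$) gives $n_{i+j(n)}=[n_{j(n)}/2^i]$; hence $Y_{i+j(n)}$ is a continuous function of the vector $(D(j(n),l))_{l=1}^{n_{j(n)}}$ alone (the ``left--over'' anchor--scale maxima play no role, as $2^i n_{i+j(n)}\le n_{j(n)}$). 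Substituting $\log_2 D(j(n),l)=j(n)/\alpha+\log_2\big(D(j(n),l)/2^{j(n)/\alpha}\big)$ and using $\sum_{i=1}^{\ell}w_i=0$, the deterministic term $j(n)/\alpha$ cancels out of $\what H_n$, and I obtain
$$
  \what H_n \;=\; \Phi_{n_{j(n)}}\!\Big(\big(D(j(n),l)/2^{j(n)/\alpha}\big)_{l=1}^{n_{j(n)}}\Big),
$$
where, for each integer $p\ge 2^{\ell}$, the map $\Phi_p(x_1,\ldots,x_p):=\sum_{i=1}^{\ell}\frac{w_i}{[p/2^i]}\sum_{k=1}^{[p/2^i]}\log_2\big(\bigvee\{x_l:(k-1)2^i<l\le k2^i\}\big)$ is continuous on $(0,\infty)^{p}$, and $\Phi_p(z_1,\ldots,z_p)=\sum_{i=1}^{\ell}w_i Y_i^{Z}$, with $\{Y_i^{Z}\}_{i=1}^{\ell}$ the max--spectrum of $z_1,\ldots,z_p$. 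Thus $\Phi_p$ depends on $p$ only, not otherwise on $n$.

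\emph{Step 2 (passage to the limit).} Next I would invoke \refeq{X_m} with $m=2^{j(n)}$: since $X_{2^{j(n)}}(k)=D(j(n),k)$ and $j(n)\to\infty$, for every fixed $p$ one has $\big(D(j(n),l)/2^{j(n)/\alpha}\big)_{l=1}^{p}\stackrel{d}{\longrightarrow}\big(Z(l)\big)_{l=1}^{p}$, the $Z(l)$'s being i.i.d.\ $\alpha$--Fr\'echet and hence a.s.\ strictly positive. Because $n_{j(n)}=[n/2^{j(n)}]\in\{r,r+1,\ldots,2r-1\}$ takes only finitely many values, I would argue along any subsequence on which $n_{j(n)}\equiv p$ is constant: there $\what H_n=\Phi_p(\cdot)$ is one fixed continuous functional evaluated at a vector converging in law to $(Z(l))_{l=1}^{p}$, whose distribution assigns no mass to the set on which $\Phi_p$ is discontinuous, so the continuous mapping theorem gives $\what H_n\stackrel{d}{\longrightarrow}\sum_{i=1}^{\ell}w_i Y_i^{Z}$ with $\{Y_i^Z\}$ the max--spectrum of $p$ i.i.d.\ $\alpha$--Fr\'echet variables; a second application (to $x\mapsto 1/x$, legitimate since $\sum_i w_i Y_i^{Z}$ has a nonatomic distribution, in particular $\P\{\sum_i w_i Y_i^{Z}=0\}=0$) yields $\what\alpha_n\stackrel{d}{\longrightarrow}\what\alpha_F$.

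\emph{The main obstacle.} The one genuinely delicate point is that $n_{j(n)}$ does not converge but merely oscillates within the bounded window $[r,2r)$; this is why the conclusion has to be read (and established) subsequentially, the number of i.i.d.\ Fr\'echet variables entering the limiting max--spectrum being the corresponding value of $n_{j(n)}$ --- which equals $r$ exactly, e.g., along $n$ with $n/r\in 2^{\bbN}$, or whenever $n_{j(n)}$ stabilises. Everything else is routine book-keeping: the exact block--maxima recursion and the identity $[[x]/m]=[x/m]$ (which make $\Phi_p$ a function of $p$ only), the cancellation of the $j(n)/\alpha$ drift through $\sum_i w_i=0$, and the continuity of $\Phi_p$ on $(0,\infty)^{p}$ together with the a.s.\ positivity of the $\alpha$--Fr\'echet limit (which is what lets the continuous mapping theorem apply at the $\log_2$'s and at the final inversion).
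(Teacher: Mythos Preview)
Your argument is essentially the paper's: both of you anchor at the base scale $j(n)$, use the exact dyadic recursion to express the higher--scale block--maxima as continuous (max) functions of $\{D(j(n),k)\}$, cancel the drift $j(n)/\alpha$ via $\sum_i w_i=0$, and then invoke the finite--dimensional convergence \refeq{X_m} together with the continuous mapping theorem. The paper's writeup is simply more telegraphic --- it applies the continuous mapping theorem twice (once to pass to $\{\log_2 D(j(n),k)-j(n)/\alpha\}$ and once to produce the $Y_{i+j(n)}$'s), without packaging everything into a single named map $\Phi_p$ as you do.

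The one place where you are more careful than the paper is the ``obstacle'' you flag: the paper writes $k=1,\ldots,r$ at the anchor scale and $k=1,\ldots,[r/2^i]$ at scale $i+j(n)$, tacitly treating $n_{j(n)}$ as equal to $r$, whereas in fact $n_{j(n)}=[n/2^{j(n)}]\in\{r,\ldots,2r-1\}$. Your subsequence remedy is the right way to make the argument airtight for general $n$; along $n$ with $n/r\in 2^{\bbN}$ (so $n_{j(n)}=r$) the paper's formulation and yours coincide exactly, and for other $n$ your observation that the limiting max--spectrum involves $n_{j(n)}$ rather than $r$ Fr\'echet variables is a genuine refinement of the stated conclusion. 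This is a bookkeeping nicety rather than a different method.
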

 \begin{proof}  The result readily follows from the continuous mapping theorem.  Indeed, by \refeq{X_m}, 
 and in view of \refeq{C-j}, we have 
 $
 \{ D(j(n),k)/2^{j(n)/\alpha},\ k=1,\ldots,r\} \stackrel{d}{\longrightarrow} \{ Z(k),\ k=1,\ldots,r\},\ \ \mbox{ as }n\to\infty,
 $
 and hence
 $$
 \{ \log_2 D(j(n),k) - j(n)/\alpha,\ k=1,\ldots,r\} \stackrel{d}{\longrightarrow} \{ \log_2 Z(k),\ k=1,\ldots,r\},\ \ \mbox{ as }n\to\infty.
 $$
 Due to the dyadic structure of the block--maxima $D(j(n),k)$'s, one can recover $D(i+j(n),k)$'s, for $i=1,\ldots,\ell$ and
 $k=1,\ldots, [r/2^{i}]$ from the block--maxima $D(j(n),k)$, $k=1,\ldots,r$ through a {\it continuous} combination of maxima operations.
 Thus, by applying the continuous mapping theorem again, we obtain
 $$
  \{ Y_{i+j(n)} - j(n)/\alpha \}_{i=1}^\ell \stackrel{d}{\longrightarrow}  \{ Y_{i}^Z \}_{i=1}^\ell,\ \ \mbox{ as }n\to\infty,
 $$
 which yields the convergence \refeq{p:alpha_n-to-alpha_F} since $\sum_{i=1}^\ell w_i j(n)/\alpha = 0$.
$\Box$
\end{proof}

 \medskip
 Condition \refeq{X_m} appears stringent, but contrary to intuition, it holds in most practical situations.
 We were unable to find an example of ergodic heavy--tailed time series $X$ (of positive extremal index) with 
 asymptotically {\it dependent} block--maxima.  We next show that \refeq{X_m} holds for 
 the large class of linear processes. 

 Let $\xi_k,\ k\in\bbZ$ be i.i.d.\   heavy--tailed innovations, such that 
 $\P\{ |\xi_k| > x\} \sim \sigma_\xi^\alpha x^{-\alpha},\ x\to\infty$, where 
 $\P\{ \xi_k > x\}/ \P\{ |\xi_k| > x\} \to p,\ x\to\infty$ for some $p\in [0,1]$.  
 Consider the linear process
 \beq\label{e:lin-proc}
  X(k) := \sum_{i=-\infty}^\infty c_i \xi_{k-i},\ \ k\in \bbZ.
 \eeq
\cite{mikosch:samorodnitsky:2000} provide a recent and comprehensive treatment
of the linear processes as in \eqref{e:lin-proc} (see also \cite{davis:resnick:1985}).
More precisely, by Lemma A.3 in \cite{mikosch:samorodnitsky:2000}, the following 
conditions on the $c_i$'s guarantee the almost sure convergence of the series in \eqref{e:lin-proc}. 
\beq\label{e:lin-proc-cond}
 \sum_{i=-\infty}^\infty c_i^2 < \infty \mbox{ (if $\alpha>2$) }\ \ \ \mbox{ and } \ \ \ \sum_{i=-\infty}^\infty
  |c_i|^{\alpha -\epsilon } <\infty,\ \ \mbox{ (if $\alpha\le 2$), for some }\epsilon>0.
\eeq
These conditions are necessary for $\alpha>0$, and nearly optimal for $\alpha\le 2$ (see Lemma A.3 in 
\cite{mikosch:samorodnitsky:2000}).  By Lemma A.3 in the last reference, we also have that the
tails of the $X(k)$'s are regularly varying with exponent $\alpha$ (see Relation (A.2) therein).

The following result shows that \refeq{X_m} holds for the linear process $X = \{X(k)\}_{k\in\bbZ}$ under the conditions
\eqref{e:lin-proc-cond}. The proof follows by a simple combination of arguments in \cite{davis:resnick:1985} and 
\cite{mikosch:samorodnitsky:2000} and it is given in the Appendix, for completeness. 

 \begin{lemma}\label{l:lin-proc} Let $c_+:= \max_{i\ge 0} c_i$ and $c_-:= \max_{i\ge 0} (-c_i)$. 
 Suppose that either $p c_+ >0$ or $(1-p)c_- >0$. Then, the linear process in \refeq{lin-proc} satisfies 
 \refeq{X_m} where the $Z(k)$'s are i.i.d.\  $\alpha-$Fr\'echet with scale coefficient 
 $\sigma_\xi (pc_+^\alpha + (1-p)c_-^\alpha)^{1/\alpha}$.
 \end{lemma}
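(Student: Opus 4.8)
The plan is to establish the finite--dimensional convergence in \refeq{X_m} for the linear process by a truncation argument, following the point--process techniques of \cite{davis:resnick:1985} and the tail estimates of \cite{mikosch:samorodnitsky:2000}. First I would reduce to the one--sided case: writing $c_i^+ := \max(c_i,0)$ and $c_i^- := \max(-c_i,0)$, the contribution of the positive part of the coefficients produces maxima of order $m^{1/\alpha}$ with scale $\sigma_\xi (p c_+^\alpha)^{1/\alpha}$ coming from large positive innovations, and the negative part contributes $\sigma_\xi((1-p)c_-^\alpha)^{1/\alpha}$ coming from large negative innovations; under the hypothesis $p c_+ > 0$ or $(1-p)c_- > 0$ at least one of these is strictly positive, and the two sources are asymptotically independent because they are driven by essentially disjoint sets of extreme innovations (large positive vs.\ large negative $\xi$). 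This is exactly the structure of Lemma A.3 and the accompanying point--process statements in \cite{mikosch:samorodnitsky:2000}.

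The key steps, in order, are: (1) Fix a truncation level $q$ and write $X(k) = X_q(k) + \wtilde X_q(k)$, where $X_q(k) := \sum_{|i|\le q} c_i \xi_{k-i}$ is the truncated linear process (an $m$--dependent, in fact $2q$--dependent, moving average) and $\wtilde X_q(k)$ is the tail remainder. (2) For the truncated process $X_q$, use the classical result of \cite{davis:resnick:1985} on point--process convergence for finite--order moving averages with regularly varying innovations: the point process $\sum_k \varepsilon_{(k/m,\, X_q(k)/m^{1/\alpha})}$ converges to a Poisson random measure whose points above any level are generated by large $|\xi|$'s multiplied by the $c_i^\pm$'s, which yields \refeq{X_m} for $X_q$ with scale $\sigma_\xi(p (c_+^{(q)})^\alpha + (1-p)(c_-^{(q)})^\alpha)^{1/\alpha}$, where $c_\pm^{(q)} := \max_{0\le i\le q} c_i^\pm$. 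Since $c_\pm^{(q)} \uparrow c_\pm$ as $q\to\infty$, the limiting Fr\'echet scale for the truncation converges to the claimed one. (3) Control the remainder: show that for every $\eta>0$,
$$
\limsup_{q\to\infty}\limsup_{m\to\infty} \P\Big\{ \max_{1\le k\le m} |\wtilde X_q(k)| > \eta m^{1/\alpha} \Big\} = 0,
$$
which is where the summability condition \refeq{lin-proc-cond} enters: it forces $\sum_{|i|>q}|c_i|^{\alpha-\epsilon}\to 0$, and a union bound together with the regular variation of $|\xi|$ (plus a Markov/Potter bound argument, splitting according to $\alpha \le 2$ or $\alpha>2$ exactly as in \cite{mikosch:samorodnitsky:2000}) gives the required uniform negligibility. (4) Combine: a standard ``converging together'' lemma (Theorem 3.2 in \cite{billingsley:1999} or Theorem 4.2 in \cite{davis:resnick:1985}) upgrades the joint statement for $X_q + \wtilde X_q$ to \refeq{X_m} for $X$ itself, and identifies the limiting scale as $\sigma_\xi(p c_+^\alpha + (1-p) c_-^\alpha)^{1/\alpha}$.

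The main obstacle is step (3), the uniform--in--$m$ negligibility of the truncation tail at the scale $m^{1/\alpha}$, because the remainder $\wtilde X_q(k)$ is itself an infinite linear combination of heavy--tailed innovations and a naive first--moment bound is unavailable when $\alpha \le 1$. The resolution is to invoke precisely the tail asymptotics for infinite linear processes established in Lemma A.3 (and Relation (A.2)) of \cite{mikosch:samorodnitsky:2000}: it gives $\P\{|\wtilde X_q(1)| > x\} \lesssim \big(\sum_{|i|>q}|c_i|^{\alpha'}\big)\, x^{-\alpha}$ for a suitable exponent (with the Potter--type correction absorbed into $\epsilon$ when $\alpha\le 2$), so that a union bound over $1\le k\le m$ yields a probability $\lesssim \eta^{-\alpha}\sum_{|i|>q}|c_i|^{\alpha'}$, which vanishes as $q\to\infty$ uniformly in $m$. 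Everything else is a routine assembly of known point--process limit theory, so I would state steps (2) and (4) briefly, citing \cite{davis:resnick:1985}, and spend the bulk of the argument on making step (3) precise via the cited estimates of \cite{mikosch:samorodnitsky:2000}.
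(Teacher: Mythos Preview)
Your truncation argument is correct, but the paper takes a more direct route that avoids redoing the truncation altogether. Rather than split $X=X_q+\wtilde X_q$ and control the remainder, the paper observes that the full point--process convergence of \cite{davis:resnick:1985} (their Theorem 2.4(i)) already holds for the \emph{infinite} linear process under \refeq{lin-proc-cond}, because the only ingredients in Davis--Resnick's proof that depend on the coefficient conditions are their Lemma 2.3 and Relation (2.7), and these are supplied directly by Lemma A.3 of \cite{mikosch:samorodnitsky:2000}. Given the limit
$
\sum_{k}\epsilon_{(k/m,\,m^{-1/\alpha}X(k))}\Rightarrow \sum_{i,k}\epsilon_{(t_k,\,j_k c_i)}
$
for the full process, the paper then defines a single map $T_r:M_p\to\bbR^r$ that reads off the suprema of the second coordinate over the $r$ disjoint time windows $(0,1/r],\ldots,((r-1)/r,1]$, checks $T_r$ is a.s.\ continuous at the limiting PRM (since the intensity puts no mass on the window boundaries), and applies the continuous mapping theorem once. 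Independence of $Z(1),\ldots,Z(r)$ is then immediate from the disjointness of the windows and the Poisson property, and the scale is identified from the extremal process representation $\vee_{t_k\in(0,1/r]}(c_+ j_k\vee(-c_-)j_k)$.

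What your approach buys is transparency: the reader sees exactly where the summability in \refeq{lin-proc-cond} enters (your step (3), via the tail asymptotic for $\wtilde X_q$), and the argument does not require the reader to know that Davis--Resnick's Theorem 2.4 extends verbatim to the infinite case. What the paper's approach buys is economy: no separate remainder bound, no ``converging together'' step, and no limit in $q$---the truncation is already buried inside the cited point--process result, so one continuous map finishes the job. Your step (3) is essentially reproving the piece of \cite{davis:resnick:1985}/\cite{mikosch:samorodnitsky:2000} that the paper simply cites.
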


\noi 
 The next result provides some further insight to the observed {\it independence phenomenon} for block--maxima.
 Namely, it turns out that the block--maxima of a heavy--tailed time series are always asymptotically independent, provided
 that they converge to a max--stable process. 

 \begin{lemma}\label{l:ai} Let $X=\{X(k)\}_{k\in\bbZ}$ be a heavy--tailed time series with marginal distributions 
 as in \refeq{F-tail}.  Suppose that \refeq{X_m} holds where the $Z(k)$'s are not assumed independent.

 If the limit time series $Z=\{Z(k)\}_{k\in\bbN}$ is multivariate max--stable, then it consists of i.i.d.\  random variables.
 \end{lemma}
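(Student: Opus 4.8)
The plan is to show that the limiting sequence $Z$ inherits a max self--similarity property from the nested structure of block--maxima, and then to use the spectral (exponent--measure) representation of max--stable laws to turn this into full independence. I begin with two preliminaries. Since $X$ is strictly stationary and, for each fixed $m$, the block--maxima $X_m(k)=\bigvee_{i=1}^m X(m(k-1)+i)$ are a stationary functional of $X$, the limit $\{Z(k)\}_{k\in\bbN}$ in \refeq{X_m} is stationary; and by \refeq{F-tail} together with classical extreme value theory for stationary sequences (see \cite{leadbetter:lindgren:rootzen:1983}), each $Z(k)$ is $\alpha$--Fr\'echet with one common scale $\sigma\ge 0$. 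The case $\sigma=0$ makes $Z$ constant, hence trivially i.i.d., so I assume $\sigma>0$.

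Next I derive a scaling identity. Fix $r\in\bbN$ and regroup $r$ consecutive blocks of size $m$ into a single block of size $rm$: then $X_{rm}(k)=\bigvee_{l=1}^r X_m(r(k-1)+l)$, so $(rm)^{-1/\alpha}X_{rm}(k)=r^{-1/\alpha}\bigvee_{l=1}^r\bigl(m^{-1/\alpha}X_m(r(k-1)+l)\bigr)$. Letting $m\to\infty$ and applying \refeq{X_m} on the left (along $rm\to\infty$) and the continuous mapping theorem on the right, uniqueness of finite--dimensional limits gives $\{Z(k)\}_{k\in\bbN}\stackrel{d}{=}\{r^{-1/\alpha}\bigvee_{l=1}^r Z(r(k-1)+l)\}_{k\in\bbN}$, i.e.\ $Z$ is max self--similar in the sense of Definition \ref{d:max-ss} with $H=1/\alpha$. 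In particular, reading off the first coordinate, $\bigvee_{l=1}^r Z(l)\stackrel{d}{=}r^{1/\alpha}Z(1)$, so $\bigvee_{l=1}^r Z(l)$ is $\alpha$--Fr\'echet with scale exactly $r^{1/\alpha}\sigma$ --- that is, the scale a maximum of $r$ independent $\alpha$--Fr\'echet$(\sigma)$ variables would have.

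It remains to turn this maximality of the scale into independence. By hypothesis $(Z(1),\dots,Z(r))$ is multivariate max--stable with $\alpha$--Fr\'echet margins, so it admits an exponent--measure representation $-\log\P\{Z(l)\le x_l,\ 1\le l\le r\}=\nu\bigl(\bigcup_{l=1}^r\{y:\,y_l>x_l\}\bigr)$, where $\nu$ is a measure on $[0,\infty)^r\setminus\{0\}$, homogeneous of order $-\alpha$, with $\nu(\{y:\,y_l>t\})=\sigma^\alpha t^{-\alpha}$. Subadditivity of $\nu$ yields $\nu\bigl(\bigcup_l\{y_l>t\}\bigr)\le\sum_l\nu(\{y_l>t\})=r\sigma^\alpha t^{-\alpha}$, with equality if and only if $\nu(\{y_l>t\}\cap\{y_{l'}>t\})=0$ for all $l\neq l'$; the previous step says exactly that equality holds (at $t=1$, hence for all $t>0$ by homogeneity). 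Since the set on which two coordinates are simultaneously positive is a countable union of such null sets, $\nu$ is carried by the coordinate axes, whence $\nu\bigl(\bigcup_l\{y_l>x_l\}\bigr)=\sum_l\sigma^\alpha x_l^{-\alpha}$ for every $x$; thus $Z(1),\dots,Z(r)$ are independent. As $r$ is arbitrary and $Z$ is stationary, $\{Z(k)\}_{k\in\bbN}$ is i.i.d.

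I expect the only genuine obstacle to be the last step: one must invoke the correct spectral/exponent--measure representation of multivariate max--stable laws (e.g.\ de Haan's representation, as in the max--stable literature already cited in this paper) and argue carefully that a pooled maximum attaining the largest scale allowed by its margins forces the exponent measure onto the axes --- equivalently, in the extremal--integral formulation, that $\int(\bigvee_l f_l)^\alpha\,d\mu=\sum_l\int f_l^\alpha\,d\mu$ forces the nonnegative spectral functions $f_l$ to have pairwise disjoint $\mu$--supports. The regrouping identity, the continuous--mapping step, and the inheritance of stationarity are all routine.
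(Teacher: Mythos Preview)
Your argument is correct and follows essentially the same route as the paper: derive the identity $\bigvee_{l=1}^r Z(l)\stackrel{d}{=}r^{1/\alpha}Z(1)$ from the nesting of block--maxima via the continuous mapping theorem, and then use the spectral/exponent--measure representation of max--stable laws to conclude that equality in the subadditivity bound forces disjoint supports, hence independence. The only cosmetic difference is that the paper first reduces to the bivariate case (invoking the fact that for multivariate max--stable laws pairwise independence implies full independence, cf.\ Ch.~5 of \cite{resnick:1987}) and then argues with the de~Haan spectral functions $f_1,f_2$ on $[0,1]$, whereas you treat general $r$ directly with the exponent measure; both formulations are equivalent and yield the same conclusion.
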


\noi The proof is given in the Appendix.

\subsection{On the construction of confidence intervals}
  \label{s:conf_int}

In many applications, an uncertainty assessment about the estimated tail exponent is important, which requires the 
construction of confidence intervals.

 The literature is rather sparse for confidence intervals for the heavy tail exponent even in the case of independent data. 
 We are not aware of any general results on the asymptotic distribution of the Hill or the moment 
 estimator of $\alpha$ for dependent data. Theorem \ref{t:main} above suggests the following
 {\it asymptotic confidence interval} for $\alpha$ of level $\gamma,\ 0 < \gamma <1$:
 \beq\label{e:ci}
 {\Big(}
  (\widehat{H} - \what{H} z_{(1-\gamma)/2} \sqrt{c_w}/\sqrt{n_{j}})^{-1},
   (\widehat{H} + \what{H} z_{(1-\gamma)/2} \sqrt{c_w}/\sqrt{n_{j}})^{-1} {\Big)},
 \eeq
 where $z_{(1-\gamma)/2}$ is $(1+\gamma)/2-$quantile of the standard normal distribution, and where
 $c_w$ as in Theorem \ref{t:main}.  Here, as recommended in \cite{stoev:michailidis:taqqu:2006}, we use 
 the reciprocal of a symmetric confidence interval for $H$ to obtain one 
 for $\alpha = 1/H$ (see also \refeq{main-1}).
 
\begin{table}%[hn!]
% \begin{center}
\begin{tabular}{||c||c|c|c|c|c|c|c|c|c||}
  \hline
$ j_1 $ & $ 3 $  & $ 4 $  & $ 5 $  & $ 6 $  & $ 7 $  & $ 8 $  & $ 9 $  & $ 10 $  & $ 11 $ \\
\hline
\hline
$90\% $ c.i.\ $\phi  = 0.1 $  & $ 0.891 $ & $ 0.894 $ & $ 0.912 $ & $ 0.919 $ & $ 0.897 $ & $ 0.903 $ & $ 0.889 $ & $ 0.895 $ & $ 0.875 $\\
$ \phi  = 0.3 $  &  $ 0.759 $ &  $ 0.888 $ &  $ 0.914 $ &  $ 0.915 $ &  $ 0.899 $ &  $ 0.901 $ &  $ 0.889 $ &  $ 0.895 $ &  $ 0.875 $\\
$ \phi  = 0.5 $  &  $ 0.229 $ &  $ 0.772 $ &  $ 0.889 $ &  $ 0.915 $ &  $ 0.892 $ &  $ 0.899 $ &  $ 0.888 $ &  $ 0.895 $ &  $ 0.875 $\\
$ \phi  = 0.7 $  &  $ 0.000 $ &  $ 0.299 $ &  $ 0.801 $ &  $ 0.895 $ &  $ 0.895 $ &  $ 0.899 $ &  $ 0.887 $ &  $ 0.895 $ &  $ 0.875 $\\
$ \phi  = 0.9 $  &  $ 0.000 $ &  $ 0.000 $ &  $ 0.070 $ &  $ 0.641 $ &  $ 0.843 $ &  $ 0.890 $ &  $ 0.877 $ &  $ 0.890 $ &  $ 0.875 $\\
\hline
\hline
$95\% $ c.i.\ $\phi = 0.1 $  & $ 0.943 $ & $ 0.952 $ & $ 0.954 $ & $ 0.953 $ & $ 0.949 $ & $ 0.950 $ & $ 0.931 $ & $ 0.931 $ & $ 0.904 $\\
$ \phi = 0.3 $  & $ 0.844 $ & $ 0.940 $ & $ 0.952 $ & $ 0.953 $ & $ 0.949 $ & $ 0.950 $ & $ 0.931 $ & $ 0.931 $ & $ 0.904 $\\
$ \phi = 0.5 $  & $ 0.321 $ & $ 0.854 $ & $ 0.950 $ & $ 0.954 $ & $ 0.948 $ & $ 0.950 $ & $ 0.931 $ & $ 0.931 $ & $ 0.904 $\\
$ \phi = 0.7 $  & $ 0.000 $ & $ 0.395 $ & $ 0.872 $ & $ 0.946 $ & $ 0.944 $ & $ 0.950 $ & $ 0.931 $ & $ 0.931 $ & $ 0.904 $\\
$ \phi = 0.9 $  & $ 0.000 $ & $ 0.000 $ & $ 0.123 $ & $ 0.738 $ & $ 0.911 $ & $ 0.941 $ & $ 0.927 $ & $ 0.930 $ & $ 0.904 $\\
\hline
\hline
$99\% $ c.i.\ $\phi = 0.1 $  & $ 0.990 $ & $ 0.990 $ & $ 0.989 $ & $ 0.991 $ & $ 0.987 $ & $ 0.993 $ & $ 0.975 $ & $ 0.972 $ & $ 0.947 $\\
$ \phi = 0.3 $  & $ 0.946 $ & $ 0.985 $ & $ 0.990 $ & $ 0.991 $ & $ 0.987 $ & $ 0.992 $ & $ 0.975 $ & $ 0.972 $ & $ 0.947 $\\
$ \phi = 0.5 $  & $ 0.552 $ & $ 0.953 $ & $ 0.984 $ & $ 0.990 $ & $ 0.987 $ & $ 0.991 $ & $ 0.975 $ & $ 0.972 $ & $ 0.947 $\\
$ \phi = 0.7 $  & $ 0.000 $ & $ 0.642 $ & $ 0.959 $ & $ 0.981 $ & $ 0.988 $ & $ 0.990 $ & $ 0.974 $ & $ 0.972 $ & $ 0.947 $\\
$ \phi = 0.9 $  & $ 0.000 $ & $ 0.000 $ & $ 0.276 $ & $ 0.897 $ & $ 0.968 $ & $ 0.984 $ & $ 0.973 $ & $ 0.972 $ & $ 0.947 $\\
\hline\hline
\end{tabular}
% \end{center}

\noindent\caption{\label{tab:ci_MAX_AR_1}{\small Coverage probabilities of the asymptotic confidence intervals 
 \refeq{ci} for $\alpha$ for max--AR(1) time series as in \refeq{max-X-AR} of length $2^{15}$.  Max self--similarity
 estimators $\what H = \what H(j_1,j_2)$ were used with $1\le j_1\le j_2$ and $j_2 = 15$.  Results for three confidence levels:
 $90\%$, $95\%$ and $99\%$ are shown for different values of $j_1$.}}

\end{table}

\begin{table}[t!]
% \begin{center}
\begin{tabular}{||c||c|c|c|c|c|c|c|c|c||}
  \hline
$ j_1 $  & $ 5 $  & $ 6 $  & $ 7 $  & $ 8 $  & $ 9 $  & $ 10 $  & $ 11 $ & $12$ & $13$ \\
\hline
\hline
$90\% $ c.i.\ $ \phi  = 0.1 $ &  $ 0.884 $ &  $ 0.909 $ &  $ 0.903 $ &  $ 0.907 $ &  $ 0.901 $ &  $ 0.914 $ &  $ 0.887 $ &  $ 0.902 $ &  $ 0.917 $ \\
$ \phi  = 0.3 $  &  $ 0.903 $ &  $ 0.906 $ &  $ 0.915 $ &  $ 0.888 $ &  $ 0.898 $ &  $ 0.910 $ &  $ 0.906 $ &  $ 0.907 $ &  $ 0.916 $  \\
$ \phi  = 0.5 $   &  $ 0.911 $ &  $ 0.908 $ &  $ 0.905 $ &  $ 0.905 $ &  $ 0.898 $ &  $ 0.906 $ &  $ 0.890 $ &  $ 0.902 $ &  $ 0.898 $ \\
$ \phi  = 0.7 $   &  $ 0.837 $ &  $ 0.885 $ &  $ 0.879 $ &  $ 0.898 $ &  $ 0.906 $ &  $ 0.908 $ &  $ 0.907 $ &  $ 0.906 $ &  $ 0.899 $ \\
$ \phi  = 0.9 $   &  $ 0.103 $ &  $ 0.735 $ &  $ 0.863 $ &  $ 0.888 $ &  $ 0.894 $ &  $ 0.909 $ &  $ 0.920 $ &  $ 0.909 $ &  $ 0.915 $ \\
\hline\hline
$95\% $ c.i.\ $ \phi  = 0.1 $  &  $ 0.945 $ &  $ 0.953 $ &  $ 0.950 $ &  $ 0.947 $ &  $ 0.947 $ &  $ 0.951 $ &  $ 0.946 $ &  $ 0.953 $ &  $ 0.959 $ \\
$ \phi  = 0.3 $  &  $ 0.956 $ &  $ 0.944 $ &  $ 0.953 $ &  $ 0.941 $ &  $ 0.942 $ &  $ 0.955 $ &  $ 0.946 $ &  $ 0.963 $ &  $ 0.956 $ \\
$ \phi  = 0.5 $  &  $ 0.949 $ &  $ 0.955 $ &  $ 0.956 $ &  $ 0.947 $ &  $ 0.935 $ &  $ 0.945 $ &  $ 0.947 $ &  $ 0.952 $ &  $ 0.933 $ \\
$ \phi  = 0.7 $  &  $ 0.894 $ &  $ 0.949 $ &  $ 0.939 $ &  $ 0.949 $ &  $ 0.939 $ &  $ 0.956 $ &  $ 0.954 $ &  $ 0.947 $ &  $ 0.947 $ \\
$ \phi  = 0.9 $  &  $ 0.163 $ &  $ 0.820 $ &  $ 0.935 $ &  $ 0.943 $ &  $ 0.934 $ &  $ 0.958 $ &  $ 0.959 $ &  $ 0.959 $ &  $ 0.957 $ \\
\hline\hline
$99\% $ c.i.\ $ \phi  = 0.1 $  &  $ 0.992 $ &  $ 0.993 $ &  $ 0.992 $ &  $ 0.994 $ &  $ 0.984 $ &  $ 0.989 $ &  $ 0.989 $ &  $ 0.991 $ &  $ 0.997 $ \\
$ \phi  = 0.3 $  &  $ 0.995 $ &  $ 0.991 $ &  $ 0.987 $ &  $ 0.993 $ &  $ 0.985 $ &  $ 0.992 $ &  $ 0.992 $ &  $ 0.992 $ &  $ 0.998 $ \\
$ \phi  = 0.5 $  &  $ 0.990 $ &  $ 0.996 $ &  $ 0.991 $ &  $ 0.997 $ &  $ 0.993 $ &  $ 0.986 $ &  $ 0.984 $ &  $ 0.988 $ &  $ 0.980 $ \\
$ \phi  = 0.7 $  &  $ 0.953 $ &  $ 0.990 $ &  $ 0.989 $ &  $ 0.992 $ &  $ 0.990 $ &  $ 0.994 $ &  $ 0.988 $ &  $ 0.994 $ &  $ 0.993 $ \\
$ \phi  = 0.9 $  &  $ 0.337 $ &  $ 0.933 $ &  $ 0.984 $ &  $ 0.990 $ &  $ 0.980 $ &  $ 0.995 $ &  $ 0.984 $ &  $ 0.989 $ &  $ 0.993 $ \\
\hline\hline
\end{tabular}
 % \end{center}

\caption{\label{tab:ci_2_MAX_AR_1}{\small Coverage probabilities of empirical confidence intervals based on 
 Theorem \ref{t:alpha_n-to-alpha_F} for $\alpha$ for max--AR(1) time series as in \refeq{max-X-AR} of length $2^{15}$.  Max self--similarity
 estimators $\what H = \what H(j_1,j_2)$ were used with $1\le j_1\le j_2$ and $j_2 = 15$.  Results for three confidence levels:
 $90\%$, $95\%$ and $99\%$ are shown for different values of $j_1$.}}
\end{table}

Tables \ref{tab:ci_MAX_AR_1} and \ref{tab:ci_2_MAX_AR_1} illustrate coverage probabilities of confidence intervals for $\alpha$,
based on Theorems \ref{t:main} and \ref{t:alpha_n-to-alpha_F}, respectively.  They are based on $1\, 000$ independent replications of
max--AR(1) time series $X = \{X(k)\}_{k\in\bbZ}$:
\beq\label{e:max-X-AR}
 X(k) := \phi X(k-1) \vee Z(k) = \bigvee_{i=0}^\infty \phi^i Z(k-i),\ \ k=1,\ldots,n,
\eeq
of size $n= 2^{15} = 32\, 768$ for different values of $\phi$.  Here the $Z(k)$'s are i.i.d.\  and $\alpha-$Fr\'echet with $\alpha=1.5$.
The coverage probabilities for $90\%,~95\%$ and $99\%$ levels of confidence are reported in each row, as a function of $j_1$.

Observe that when the data are closer to independent ($\phi=0.1$), the coverage probabilities match the nominal
values even for small $j_1$'s.  As the degree of dependence grows, larger values for $j_1$ are required
to achieve accurate coverage probabilities.  Nevertheless, even in the most dependent setting ($\phi=0.9$) 
the value of  $j_1=8$ in Table \ref{tab:ci_MAX_AR_1} yields very good results.
 
Observe that coverage probabilities in Table \ref{tab:ci_MAX_AR_1} deteriorate for very large scales
$j_1$.  This is due to the inadequacy of the normal approximation in Theorem \ref{t:main} in the presence
of a limited number of block--maxima.  For large $j_1$'s the regime described in Theorem \ref{t:alpha_n-to-alpha_F}
is more applicable.  Table \ref{tab:ci_2_MAX_AR_1} shows that the coverage probabilities based on
\refeq{p:alpha_n-to-alpha_F} are very accurate even for the largest scales $j_1=13$.
We obtained these confidence intervals by using a Monte Carlo method.  Namely, we approximate the
distribution of the statistics $\what \alpha_F$ based on $1,000$ independent paths of i.i.d.\  $1-$Fr\'echet variables, 
multiplied by the estimated $\what \alpha_n$'s. Although these confidence intervals are significantly slower
to compute than \refeq{ci}, they exhibit excellent coverage probabilities even for the largest scales $j_1$. 

In conclusion, the brief numerical experiments suggest that the confidence intervals in \refeq{ci} work well
in practice, even for dependent data, for judicious choice of scales $j_1$ and $j_2$.  The confidence
intervals based on Theorem \ref{t:alpha_n-to-alpha_F} on the other hand, work well for all sufficiently large scales,
where the asymptotic normality may not apply. Both types of confidence intervals are useful in practice.

\section{On the automatic selection of the cut--off scale $j_1$}
  \label{s:cut-off}
 
  In the ideal case of $\alpha-$Fr\'echet i.i.d.\ data, the max--spectrum plot of $Y_j$ is 
 linear in $j$.  When the distribution of the data is not Fr\'echet, or when the data
are dependent, then the max--spectrum is asymptotically linear,
as the scales $j$ tend to infinity. It is therefore important to select appropriately the range of large
scales $j$ for estimation purposes. In view of \refeq{Yj-sim}, one can always choose $j_2 = [\log_2 n]$ 
to be the largest available scale and hence,
the problem is reduced to choosing the scale $j_1, 1\le j_1<j_2$.  The estimator of $\alpha$ is then obtained
by performing a WLS or GLS linear regression of $Y_j$ versus $j,\ j_1\le j\le j_2$ (see \refeq{H-hat}). 
 
The ``cut-off'' parameter $j_1$ can be selected either by visually inspecting the max--spectrum or
through a data driven procedure.  In \cite{stoev:michailidis:taqqu:2006} an automatic
procedure for selecting the cut--off parameter was proposed, in the case of independent data,
whose main steps are briefly summarized next. We also demonstrate that it performs satisfactorily 
for dependent data.
The algorithm sets $j_2 := [\log_2 n]$ and $j_1:= \max\{1, j_2-b\}$, with $b=3\mbox{ or }4$ in practice
for moderate sample sizes. Next, $j_1$ is iteratively decreased until statistically significant deviations
from linearity of $Y_j,\ j_1\le j \le j_2$ are detected.  Namely, as $j_1 > 1$, at each iteration over
the scale $j_1$ the following two quantities are calculated 
$\what H_{\rm new} = \what H(j_1-1,j_2)$ and $\what H_{\rm old} = \what H(j_1,j_2)$.
Whenever the value of zero is {\em not} contained in a confidence interval centered at  
$(\what H_{\rm new} - \what H_{\rm old})$, the algorithm stops and returns the selected $j_1$ and
$\hat\alpha = 1/\what H_{\rm old}$; otherwise, it sets $j_1:=j_1-1$ and proceeds accordingly.
The construction of the confidence interval about $(\what H_{\rm new} - \what H_{\rm old})$ utilizes
the covariance matrix $\Sigma_1$ in Theorem \ref{t:main} which is the same as in the i.i.d.\  case, see
\cite{stoev:michailidis:taqqu:2006}. The asymptotic normality result suggests that the methodology in
the case of i.i.d.\  data applies asymptotically to dependent data, for moderately large scales $j_1$.
Alternatively, the results of Theorem \ref{t:alpha_n-to-alpha_F} may be used to suitably correct the confidence 
intervals on the largest scales $j_1$.  We did not implement this method, since it is computationally 
demanding in practice.

\begin{figure}[ht!]
\begin{center}
\includegraphics[width=4.5in]{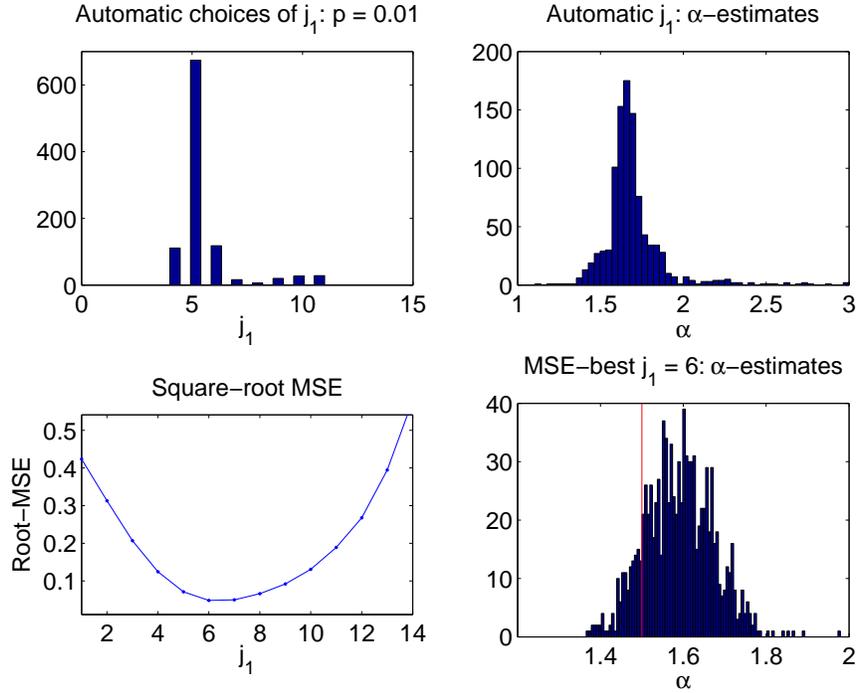}
{\caption{\label{fig:AR_frechet_cut_off} \small 
 The top--plot shows the histogram of automatically selected
 $j_1$ values for $1,000$ independent samples of size $N=2^{15}$ 
 from an exponential moving maxima $\alpha-$Fr\'echet process,
 $X = \{X(k)\}_{k\in\bbZ}$, defined as in \refeq{max-X-AR} with $\phi = 0.9$ and
 with i.i.d.\ $1.5-$Fr\'echet innovations.
 We used significance level  and back--start parameters
 are $p=0.01$ and $b=4$, respectively.  The top--right plot show
 the histogram of the resulting $\what\alpha=1/\what H$ estimates.
 The bottom--left plot shows estimates of the square root of the mean
 squared error (MSE) $\E (\what H- H)^2$ as a function
 of $j_1$.  The bottom--right plot contains a histogram of $\what \alpha$ 
 estimates obtained with the MSE--optimal choice of 
 $j_1=11$. }}
\end{center}
\end{figure}

 Figure \ref{fig:AR_frechet_cut_off} demonstrates the performance of the automatic
 selection procedure in the case of dependent data.  Even though the marginal
 distributions of $X$ are Fr\'echet, the dependence causes a knee in the
 max--spectrum plot (see, e.g.\ Figure \ref{fig:first-data-example-mspec}).  The automatic 
 selection procedure picks up this ``knee'' and yields reasonably unbiased and 
 precise {\it automatic estimates of $\alpha$} (see the top--right panel in Figure
 \ref{fig:AR_frechet_cut_off}).  Comparing the MSE plot and the histogram
 of the selected $j_1$ values, we see that over $70\%$ of the times the value 
 $j_1=5$ was chosen, which is close to the optimal value of $j_1=6$.  The histogram of the
 resulting automatic estimates of $\alpha$ (top--right panel) is similar 
 (with the exception of a few outliers) to the histogram of the estimators corresponding 
 to the MSE--optimal $j_1=6$ (bottom--right panel).

 Recall Table \ref{tab:ci_MAX_AR_1}, and observe that the case $\phi=0.9$ corresponds to
 the time series analyzed in Figure \ref{fig:AR_frechet_cut_off}.  The coverage 
 probabilities of the confidence intervals for $\alpha$ essentially match the nominal
 levels, for $j_1\ge 8$.  On the other hand the MSE--optimal value is $j_2=6$ 
 (Figure \ref{fig:AR_frechet_cut_off}) which is only slightly smaller than $j_1=8$.
 This can be contributed to the fact that the bias involved in the estimators at $j_1=6$,
 although comparable to their standard errors is significant and noticeably {\it shifts} the
 confidence interval.  As the scale $j_1$ grows, the bias quickly becomes negligible and the
 resulting confidence intervals become accurate.

\medskip
These brief experiments suggest that the automatic procedure is practical 
and works reasonably well in the case of dependent moving maxima time series.
Similar experiments for independent heavy--tailed data (not shown here)
indicate that the automatic selection procedure continues to perform well and chooses
values of $j_1$ close to the MSE--optimal ones, thus making it appropriate for use in 
empirical work. Nevertheless, a detailed study of its performance under a combination
of heavy--tailed distributions and dependence structures, as well as its sensitivity to
the choice of the back--start parameter $b$ and the level of significance $p$, is 
necessary and the subject of future work.

\section{Applications to Financial Data}
\label{s:data}

We analyze market transactions for two stocks -Intel (symbol INTC)  and Google (GOOG)-
using the max--spectrum.
The data sets were obtained from the {\it Trades and Quotes} (TAQ) data
base of {\it consolidated transactions} of the {\it New York Stock  Exchange} (NYSE) and
NASDAQ
(see \cite{wharton.data.service}) and  include the following
information about every single trade of the underlying stock:
{\it time of transaction} (up to seconds), {\it price} (of the share)
and {\it volume} (in number of shares).  In our analysis, we focus on  the traded volumes
of
the two stocks for November 2005, that could provide information  about the respective
sector's,
as well as the market's economic conditions (\cite{lo:wang:2000}).

A ubiquitous feature of the volume data sets is the presence of  heavy, Pareto type tails,
as can be seen in Figure \ref{fig:Google_Nov_7}. Specifically, the  top panel shows
transaction volumes
for the Google stock on November 7, 2005, while the bottom panels
show the Hill and the max--spectrum plots, respectively. The tail  exponent,
estimated from the max--spectrum over the range of scales $(11,15)$  is $\what\alpha =
1.0729$.
The Hill plot indicates heavy--tail exponent estimates between $1.5$  and $2$, which
correspond to the slope of the max--spectrum over the range of scales  $(1,10)$.
The small dip in the Hill plot for very large order statistics (small  values of $k$) can be
related to the behavior of the max--spectrum for scales $(11,15)$.
Such behavior is typical for almost  all liquid stocks, as well
as the presence of non--stationarity and dependence. In order to
minimize the intricate non--stationarity effects, we focus here on
traded volumes within a day.  The max--spectrum yields consistent
tail exponent estimates even in the presence of dependence.  This  fact and the
robustness of the max--spectrum suggest that it may be safely used in various practical
scenarios involving heavy--tailed  data.  In Figure \ref{fig:Google_all},
we show the max self--similarity estimates of the tail exponents, for each of the 21 trading
days in November, 2005.   The max--spectra
of these 21 time series (not shown here) of trading volumes are  essentially linear.  This
confirms the validity of a heavy--tailed model for the data, valid  over a wide range of
time scales -- from seconds up to hours and days.  Further, at the  beginning and end of
the trading day, several large volume transactions are observed, as  documented in
\cite{hong:wang:2000}.  Nevertheless, the  trading activity of Google, remains
essentially linear over the period under study, with a few bumps at the largest scales 
due to diurnal effects and other non--stationarities.

\begin{figure}[h!]
\begin{center}
\includegraphics[width=4.5in]{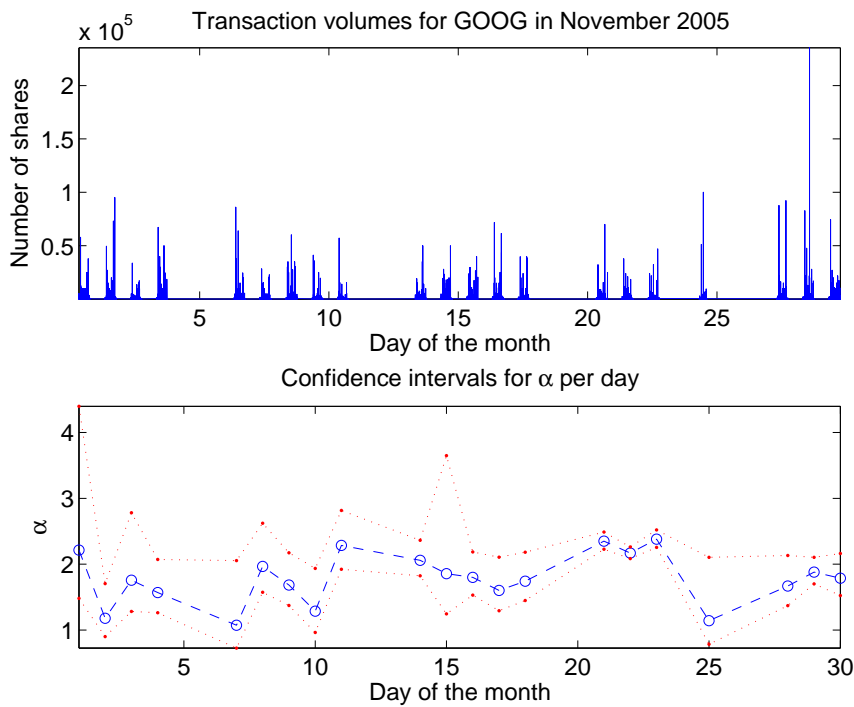}
{\caption{\label{fig:Google_all} \small
{\it Top panel:} traded volumes for the Google stock from the TAQ  data base of
consolidated
trades of NYSE and NASDAQ for the month of November, 2005.  The x-- axis and y--axis
correspond
to time and number of traded shares, respectively.  This is a high-- frequency data set,
where
each data point corresponds to the volume of a single transaction and  no temporal
aggregation
is performed.   The gaps of zeros in the data correspond to hours of  the day with no
trading and/or
weekends.
{\it Bottom panel:} estimated tail exponents (indicated by circles)  from the
max--spectrum and
their corresponding $95\%$ confidence intervals (indicated by broken  lines), based on the
asymptotic expression in \refeq{ci}.  Automatic selection
of the cut-off scale $j_1$ was done with $p=0.1$ and $b=3$ (see  Section \ref{s:cut-off}).
Every estimate was computed from a day worth of transaction volumes.
}}
\end{center}
\end{figure}

In Figure \ref{fig:Google_all}, the daily tail exponent estimates are  shown for the
Google
stock, which fluctuate between 1 and 2, along with pointwise  confidence intervals
(broken lines).
These estimates indicate that the tail exponent exhibits a  significant degree of
variability over
the period of a month, and that an infinite variance model may be  most appropriate for
modeling
trading volumes. For example, on November 7 (see Figure \ref {fig:Google_Nov_7}), the
estimate of $\alpha$ is nearly $1$, which may be due to the several  extremely large
peaks in
the volume data.  The upward knee in the max--spectrum of this data  set is likely
caused by these peaks.  The max--spectra on most other days are much  closer to linear
than the one in Figure \ref{fig:Google_Nov_7}.  Such correspondence  between the presence
of large peaks in the data and the behavior of the max--spectrum can be  used to identify
statistically significant fluctuations in the volume data.  Hence,  the max--spectrum plot
can be used not only to estimate $\alpha$, but also to detect changes  in the market.
We illustrate this last point next, by examining an unusual trading pattern in the Intel stock 
towards the end of  November, 2005.

\begin{figure}[h!]
\begin{center}
\includegraphics[width=4.5in]{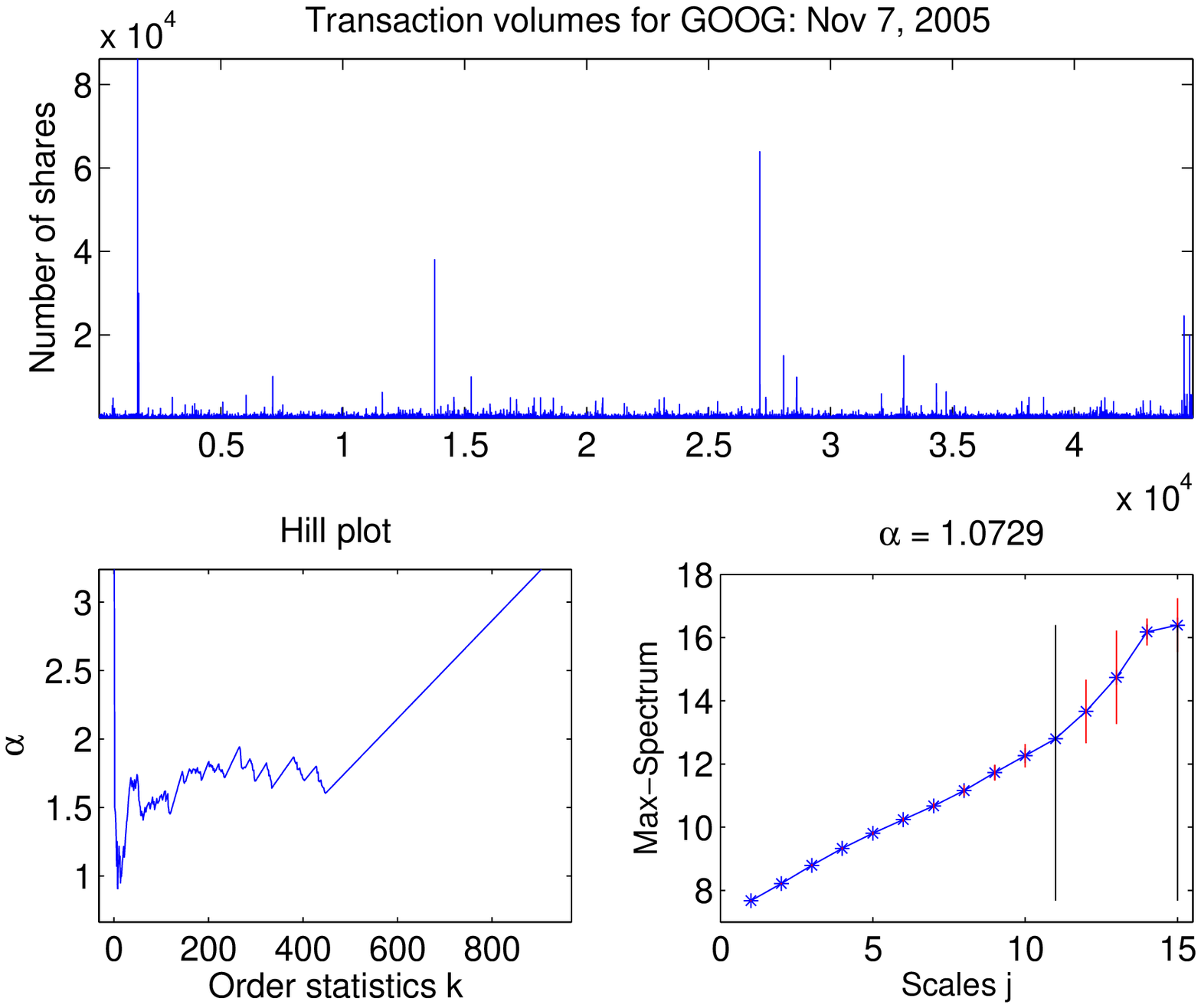}
{\caption{\label{fig:Google_Nov_7} \small
{\it Top panel:} the transaction volumes during the trading hours of  November 7, 2005. 
The
x--axis corresponds to the number of the transaction and the y--axis  to number of shares.
Note that about $50, 000$ transactions occurred on this day, which is  typical for the
Google stock.
Observe also the fairly classical heavy--tailed nature of the volume  data.
{\it Bottom panels:} the Hill plot (left) and the max--spectrum  (right) of the data. 
The Hill
plot is zoomed--in to a range where it is fairly constant and a tail  exponent between
$1.5$ and $2$
can be identified.  The max--spectrum reveals more: on large scales  the plot is steeper
than
on small scales with the tail exponent about $1$ on the range of  scales $(11,15)$ and
exponent
about $1.7$ on scales $(1,10)$.  The presence of a knee in the max-- spectrum plot
suggests different
behavior of the largest volumes on large time scales than on small  time scales and can
be contributed
to the several very large spikes of over $20, 000$ traded shares  (about 5 million US
dollars)
the top plot.
}}
\end{center}
\end{figure}

\begin{figure}[h!]
\begin{center}
\includegraphics[width=4.5in]{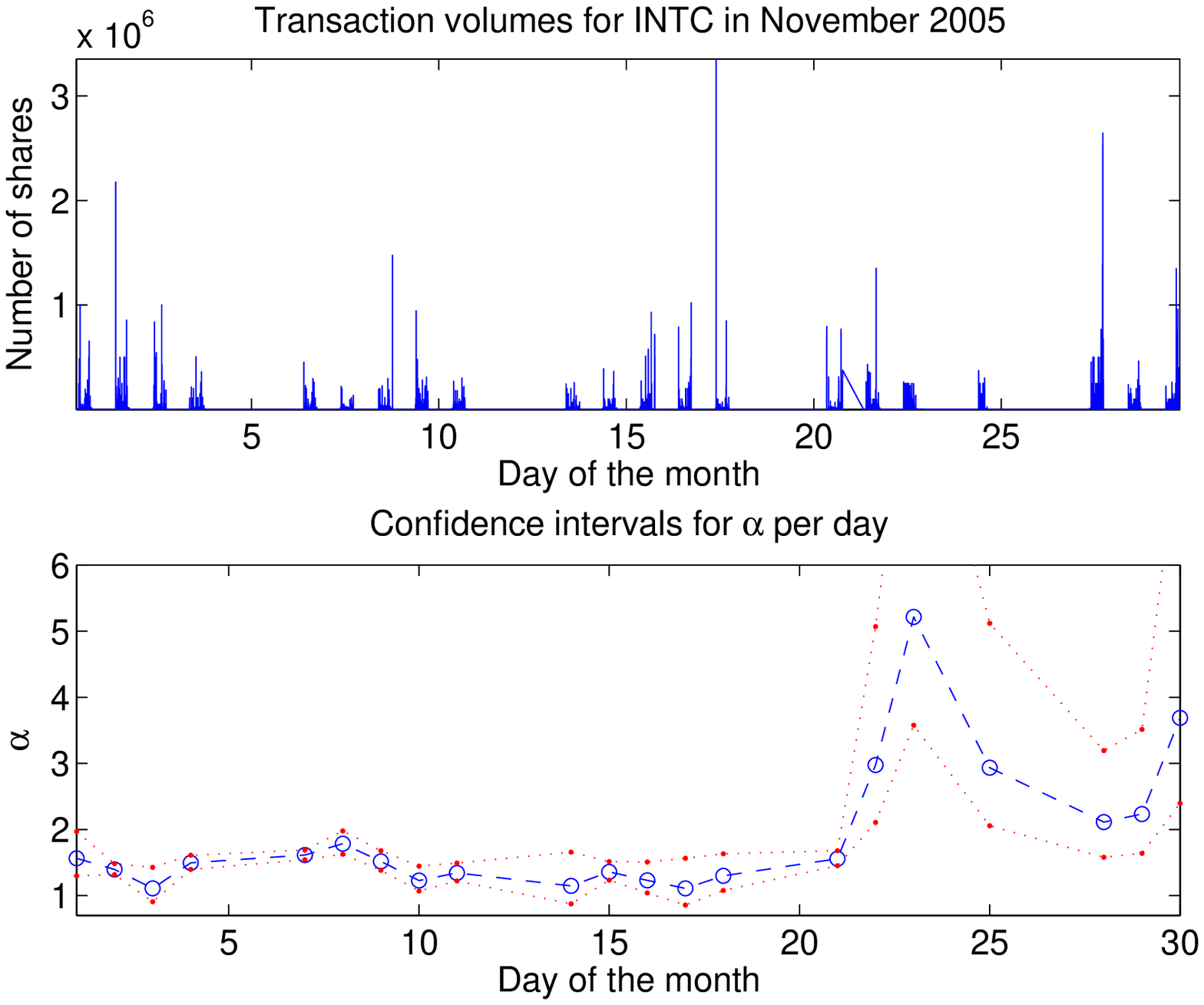}
{\caption{\label{fig:Intel_all}\small
This figure has the same format as Figure \ref{fig:Google_all}.  On  the top panel,
the traded volumes of the Intel stock for the month of November, 2005  are shown.
Observe that the tail exponent estimates on the bottom plot fluctuate  between 1.5 and 2
up to November 21.  On and after November 22, unusually high values  of $\alpha$ appear
(compare with the case of the Google stock in Figure \ref {fig:Google_all}).  This
is further analyzed in Figures \ref{fig:Intel_strange} and \ref {fig:Intel_normal}, below.
}}
\end{center}
\end{figure}

Figure \ref{fig:Intel_all} shows the max--spectrum estimates of the  tail exponents
for the traded volumes of the Intel stock for 21 trading days in  November 2005. Notice
that up to November 21,
the tail exponent is fairly constant, fluctuating between 1.2 and 2.   On November 22
(Tue)
and 23 (Wed), before the Thanksgiving holiday on November 24 (Thur),  the tail exponent
takes values
larger than 3 and 5, respectively.  This change is quite surprising  and it is
deemed significant by the corresponding confidence intervals.  A  closer look at the data
from November 23
(Figure \ref{fig:Intel_strange}) shows a changing but persistent pattern of trading as
compared to November 21; see for example Figure \ref{fig:Intel_normal}).

This behavior proves persistent and continues on November 25,  after the
Thanksgiving holiday. Moreover, no such behavior was  observed for
the Google data on any of the 21 trading days in November, 2005.
  Although trading of extremely large volumes occurs on November 23,
as seen in Figure \ref{fig:Intel_strange}, these trades are very regular and hence
inconsistent
with a heavy--tailed model.  Although regular in time, these large  transactions occur
on a time scale of several minutes, and hence the small scales of the  max--spectrum are
not affected by these peaks and behave as on a normal trading day  (see Figure
\ref{fig:Intel_normal}).
However, the large peaks dominate the larger scales $j$ and their  regularity makes
the max--spectrum essentially horizontal.  The Hill plot, shown on  the bottom--left
panel of
Figure \ref{fig:Intel_strange}, fails to pick up the unusual  behavior, since it suggests
values of $\alpha \approx 1$, which corresponds only to the
smallest portion of the max--spectrum, where $\what \alpha (7,11) =  1.0578 \approx 1$.

Our best guess is that this change in activity is related to the approval by the board
of directors of the Intel Corp.\ on November 10 of a program for a stock buy--back  worth of up to
25 billion US dollars;  (see, e.g.\ the Financial Times, London, on  Thursday November
11, page 27); hence, some of the delayed effects of the announcement of the program and 
market reaction to it are  demonstrated in the volume  activity discussed above.

\begin{figure}[h!]
\begin{center}
\includegraphics[width=4.5in]{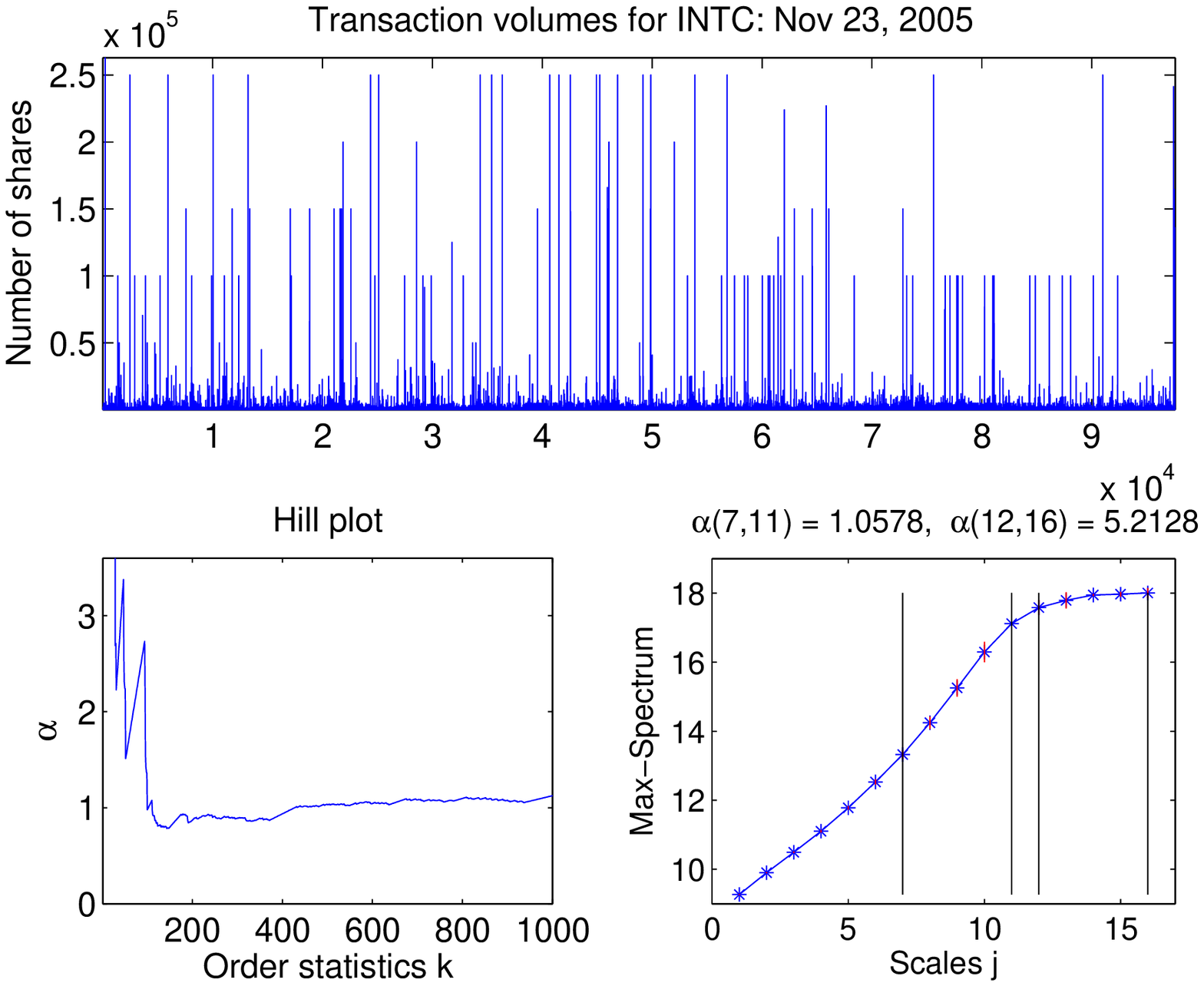}
{\caption{\label{fig:Intel_strange} \small
  {\it Top panel:} traded volumes of the Intel stock for
November 23, 2005.  Observe the regular occurrence of many very large  trades of
approximately the same sizes: 10, 000,\ 15, 000,\ 25, 000 and a few  of 20, 000 shares.
This is a very unusual behavior of the volume data, as compared to a  typical trading day
(see, e.g.\ Figure \ref{fig:Intel_normal}). {\it Bottom panels:} the  Hill plot and the
max--spectrum of the data.  Notice that the Hill plot fails to  identify the unusual
behavior of the data, whereas the max--spectrum flattens out, on  large scales due to
the regular non--heavy tailed behavior of the largest traded  volumes.  Once identified
on the
max--spectrum plot, one can perhaps read--off these details from the  volatile Hill plot
for very small values of $k$.  On small scales, where the regular large  transactions are not
frequent and
do not play a role, the max--spectrum yields tail exponents about $1 $.  This is
in line with the Hill plot.
}}
\end{center}
\end{figure}

\begin{figure}[h!]
\begin{center}
\includegraphics[width=4.5in]{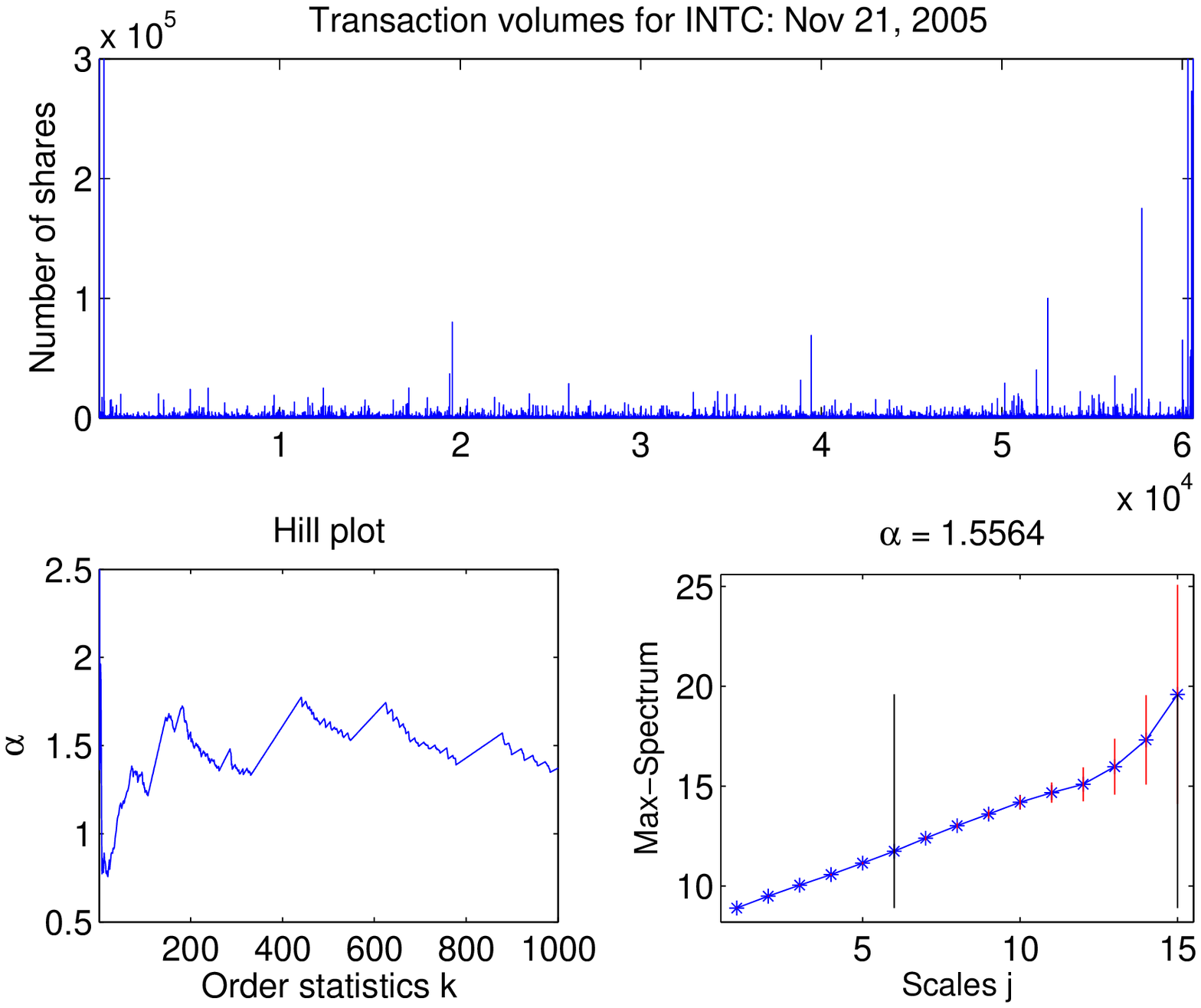}
{\caption{\label{fig:Intel_normal} \small
  This figure has the same format as Figure \ref{fig:Intel_strange}.
The top plot shows the volumes of INTC during November 21, 2005,  which as the
volumes of GOOG in Figure \ref{fig:Google_Nov_7}, behave like a  classical heavy--tailed
sample.
The Hill plot and the max--spectrum (bottom left and right panels,  respectively) identify
tail exponents around 1.5.  The cut off scale in the max--spectrum  plot was selected
automatically
with $p=0.1$ and $b=3$ (as in Figure \ref{fig:Intel_all}.   Notice  the volatile,
saw--tooth shape
of the Hill plot which is due to its non--robustness to deviations  from the Pareto model.
The max--spectrum is more robust and fairly linear with a small knee  on scale $j=12$,
which may be due to a few clusters of large volumes in the beginning  and at the end of
the trading
day.
}}
\end{center}
\end{figure}

\section*{Acknowledgments}  The authors would like to thank an anonymous referee for 
their detailed comments that led to an improved presentation of the material.  We are indebted 
to the referee for helping us strengthen the result of Lemma \ref{l:lin-proc}.  The work of
SS and GM was supported in part by NSF grant DMS-0806094.

\section{Appendix}
 \label{s:proofs}

\subsection{Rates of convergence for moment functionals of dependent maxima}
 \label{s:moments}

\begin{proposition} \label{p:moments}
Suppose that $f:(0,\infty) \to \mathbb{R}$ is an absolutely continuous function on
any compact interval $[a,b]\subset(0,\infty)$, and such that $f(x) = f(x_0) + \int_{x_0}^x f'(u) du,\ x>0$ for some (any) $x_0 >0$.

Let for some $m\in\bbR$ and $\delta>0$,
\begin{equation}\label{e:f-cond-1}
x^m|f(x)| + \esssup_{0<y\le x} y^m |f'(y)| \longrightarrow 0,\ \ \ \mbox{ as }x\downarrow 0,
\end{equation}
\begin{equation}\label{e:f-cond-2}
 x^{-\alpha}|f(x)| + x^{1+\delta} \esssup _{y\ge x} y^{-\alpha} |f'(y)| \longrightarrow 0,\ \ \ \mbox{ as }x\to\infty.
\end{equation}
Suppose also that the time series $X = \{X_n\}_{n\in\bbZ}$ satisfies Conditions \ref{cond:C1} and \ref{cond:C2},
where $c_1(x)$ is such that:
\begin{equation}\label{e:c1-cond}
 \int_{1}^\infty c_1(x) x^{-\alpha} |f'(x)| dx < \infty.
\end{equation}

\noi Then, $E |f(M_n)| <\infty,$ for all sufficiently large $n\in \mathbb{N}$, and for some $C_f>0$, independent of $n,$
\begin{equation}\label{e:Ef-order}
| \E f(M_n/n^{1/\alpha}) - \E f(Z)| \le C_f n^{-\beta},
\end{equation}
where $Z$ is an $\alpha-$Fr\'echet variable with scale coefficient $\sigma:= c_X^{1/\alpha}$.
\end{proposition}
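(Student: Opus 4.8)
The plan is to express $\E f(M_n/n^{1/\alpha})-\E f(Z)$ as $-\int_0^\infty f'(u)\bigl(F_n(u)-G(u)\bigr)\,du$, where $F_n(u)=\exp\{-c(n,u)u^{-\alpha}\}$ and $G(u)=\exp\{-c_X u^{-\alpha}\}$ are the distribution functions of $M_n/n^{1/\alpha}$ and $Z$, and then to control $|F_n-G|$ through Conditions \ref{cond:C1}--\ref{cond:C2}; the finiteness statement $\E|f(M_n)|<\infty$ comes out of the same integration-by-parts machinery. First I would record that identity. For a nonnegative random variable $W$ with continuous distribution function, writing $f(W)=f(1)+\int_1^W f'(u)\,du$ (the hypothesis on $f$ with $x_0=1$) and applying Tonelli to $|f'|$ gives
$$
\E\int_{1\wedge W}^{1\vee W}|f'(u)|\,du \;=\; \int_1^\infty |f'(u)|\,\P\{W>u\}\,du \;+\; \int_0^1 |f'(u)|\,\P\{W\le u\}\,du .
$$
Whenever the right-hand side is finite, $\E|f(W)|<\infty$ and Fubini yields $\E f(W)=f(1)+\int_1^\infty f'(u)\P\{W>u\}\,du-\int_0^1 f'(u)\P\{W\le u\}\,du$. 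I would invoke this for $W=M_n/n^{1/\alpha}$, for $W=Z$, and for $W=M_n$ (distribution function $u\mapsto F_n(u/n^{1/\alpha})$).

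Next comes the finiteness bookkeeping. On $[1,\infty)$ all three upper tails are $O(u^{-\alpha})$: $\P\{Z>u\}\le c_X u^{-\alpha}$; and by the union bound with \refeq{F-tail}--\refeq{Lx}, $\P\{M_n/n^{1/\alpha}>u\}=\P\{M_n>n^{1/\alpha}u\}\le n\bigl(1-F(n^{1/\alpha}u)\bigr)\le C u^{-\alpha}$ for $u\ge1$, and likewise $\P\{M_n>u\}\le n(1-F(u))\le C n\,u^{-\alpha}$. Since $\esssup_{y\ge u}y^{-\alpha}|f'(y)|=o(u^{-1-\delta})$ by \refeq{f-cond-2} and $f'$ is locally integrable, $\int_1^\infty|f'(u)|u^{-\alpha}\,du<\infty$, so all $[1,\infty)$ contributions converge. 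On $(0,1]$, Condition \ref{cond:C2} gives, for all sufficiently large $n$, $F_n(u)\le e^{-c_2 u^{-(\alpha-\gamma)}}$ and $\P\{M_n\le u\}=F_n(u/n^{1/\alpha})\le e^{-c_2 n^{1-\gamma/\alpha}u^{-(\alpha-\gamma)}}$, while $\P\{Z\le u\}=e^{-c_X u^{-\alpha}}$; since $\gamma<\alpha$ each of these decays faster than any power of $u$ as $u\downarrow0$, so against $|f'(u)|\le C u^{-m}$ (from \refeq{f-cond-1}) the $(0,1]$ contributions converge as well. This gives $\E|f(M_n)|<\infty$ for all sufficiently large $n$, and also $\E|f(Z)|<\infty$ and $\E|f(M_n/n^{1/\alpha})|<\infty$, so the difference in \refeq{Ef-order} is well defined.

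For the rate I would subtract the identity of the first step for $W=Z$ from the one for $W=M_n/n^{1/\alpha}$: the $f(1)$ cancel, the $\int_1^\infty$ pieces give $\int_1^\infty f'(u)[(1-F_n(u))-(1-G(u))]\,du=-\int_1^\infty f'(u)(F_n(u)-G(u))\,du$, and with the $\int_0^1$ pieces,
$$
\E f(M_n/n^{1/\alpha})-\E f(Z) \;=\; -\int_0^\infty f'(u)\bigl(F_n(u)-G(u)\bigr)\,du .
$$
Then I would estimate $|F_n(u)-G(u)|$ via $|e^{-a}-e^{-b}|\le|a-b|\,e^{-\min\{a,b\}}$, using $|c(n,u)-c_X|\le c_1(u)n^{-\beta}$ from Condition \ref{cond:C1} and $\min\{c(n,u),c_X\}\ge c_4\min\{1,u^\gamma\}$ with $c_4=\min\{c_2,c_X\}$ from Condition \ref{cond:C2} (for large $n$), obtaining $|F_n(u)-G(u)|\le n^{-\beta}c_1(u)u^{-\alpha}\exp\{-c_4\min\{1,u^\gamma\}u^{-\alpha}\}$. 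Hence
$$
\bigl|\E f(M_n/n^{1/\alpha})-\E f(Z)\bigr| \;\le\; n^{-\beta}\int_0^\infty |f'(u)|\,c_1(u)\,u^{-\alpha}\,\exp\bigl\{-c_4\min\{1,u^\gamma\}u^{-\alpha}\bigr\}\,du \;=:\; C_f\,n^{-\beta},
$$
and the argument ends with $C_f<\infty$: on $[1,\infty)$ the exponential is $\le1$ and $\int_1^\infty|f'(u)|c_1(u)u^{-\alpha}\,du<\infty$ is exactly \refeq{c1-cond}; on $(0,1]$ the exponential is $\exp\{-c_4 u^{-(\alpha-\gamma)}\}$, which decays faster than any power as $u\downarrow0$ since $\alpha-\gamma>0$, hence is integrable against $|f'(u)|c_1(u)u^{-\alpha}\le C u^{-(m+R+\alpha)}$ (the latter near $0$ by \refeq{f-cond-1} and Condition \ref{cond:C1}, with $c_1$ taken locally bounded on $(0,\infty)$).

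The step I expect to be the main obstacle is establishing $C_f<\infty$ near the origin: one has to be sure the super-polynomial decay coming from Condition \ref{cond:C2} (which needs $\gamma<\alpha$) dominates the negative powers of $u$ produced jointly by $f'$, by $c_1$, and by the factor $u^{-\alpha}$, and that the $[1,\infty)$ part is matched exactly by hypothesis \refeq{c1-cond}; these integrability facts are also precisely what legitimizes the Tonelli/Fubini steps above. The bound on $F_n-G$ and the $[1,\infty)$ estimates are routine once Conditions \ref{cond:C1}--\ref{cond:C2} are in hand.
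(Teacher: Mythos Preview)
Your proposal is correct and follows essentially the same route as the paper: express $\E f(M_n/n^{1/\alpha})-\E f(Z)$ as $\int_0^\infty (G-F_n)f'\,du$ via integration by parts, bound $|F_n-G|$ by $n^{-\beta}c_1(u)u^{-\alpha}$ times an exponential factor using Conditions~\ref{cond:C1}--\ref{cond:C2}, and split $\int_0^1+\int_1^\infty$ with \refeq{c1-cond} handling the tail and the super-polynomial decay of the exponential handling the origin. The only cosmetic differences are that the paper defers the finiteness/Fubini justification to its earlier reference and bounds $e^{-\min\{c(n,u),c_X\}u^{-\alpha}}$ by a sum of two exponentials rather than your single term with $c_4=\min\{c_2,c_X\}$; both are equivalent.
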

\begin{proof} The proof is similar to the proof of Theorem 3.1 in \cite{stoev:michailidis:taqqu:2006}.
Indeed, as in the above reference, one can show that $\E |f(Z)|<\infty $ and $\E |f(M_n)|<\infty$, for all sufficiently large $n$.
Further, by using the conditions \refeq{f-cond-1} and \refeq{f-cond-2} and integration
by parts, we have that
\begin{equation}\label{e:prop:moments-1}
  \E f(M_n/n^{1/\alpha}) - \E f(Z) = \int_0^\infty (G(x) - F_n(x)) f'(x) dx,
\end{equation}
where $F_n(x):= \P\{M_n/n^{1/\alpha} \le x\}$ and $G(x) = \P\{Z\le x\}$. Since $F_n(x) = e^{-c(n,x)x^{-\alpha}},$ by the mean
value theorem, we have
\begin{eqnarray*}
|G(x) - F_n(x)| &=& | e^{- c_X x^{-\alpha}} - e^{-c(n,x) x^{-\alpha}}| \le |c(n,x) -  c_X| x^{-\alpha} e^{-\min\{\theta c_X,\, c(n,x)\} x^{-\alpha}} \\
                &\le& n^{-\beta} c_1(x)x^{-\alpha} {\Big(} e^{-c_2 x^{-(\alpha-\gamma)}} + e^{-\theta c_X x^{-\alpha}} {\Big)},
\end{eqnarray*}
where in the last inequality, we used Relations \refeq{C1} and \refeq{C2}.

Thus, by \refeq{prop:moments-1}, we have that
\begin{eqnarray}\label{e:prop:moments-2}
|\E f(M_n/n^{1/\alpha}) - \E f(Z)| &\le& n^{-\beta} \int_0^\infty c_1(x) x^{-\alpha} |f'(x)| {\Big(} e^{-c_2 x^{-(\alpha-\gamma)}}
             + e^{- c_X x^{-\alpha}} {\Big)} dx \nonumber\\
    &=:& n^{-\beta} {\Big(}\int_0^1 + \int_1^{\infty}{\Big)}.
\end{eqnarray}
The last integral is finite. Indeed, since the exponential terms above are bounded, Relation (\ref{e:c1-cond}) implies that
the integral ``$\int_1^\infty$'' is finite.  On the other hand, conditions \refeq{C1} and \refeq{f-cond-1} imply that,
$c_1(x) |f'(x)| = {\cal O}(x^{-R}),\ x\downarrow 0$, for some $R\in\bbR$.  However, for all $p>0$, we have
 $(e^{-c_2x^{-(\alpha-\gamma)}} + e^{- c_X x^{-\alpha}}) = o(x^{p}),\  x\downarrow 0$, since $\alpha-\gamma>0$.  This implies that
the integral in ``$\int_0^1$'' in (\ref{e:prop:moments-2}) is also finite.  This completes the proof of (\ref{e:Ef-order}).
$\Box$
\end{proof}

\medskip
\noindent{\sc Proof of Proposition \ref{p:log-moments}:} It is enough to show that the functions $f(x):= |\ln(x)|^p$ and
 $f(x):= (\ln(x))^k,\ p>0,\ k\in\bbN$ satisfy the conditions of Proposition \ref{p:moments}. In the first case, for example,
 $|f'(x)| = p x^{-1} |\ln(x)|^{p-1},\ x>0$.  Therefore, the assumption $\int_1^\infty c_1(x) x^{-\alpha-1+\delta} dx <\infty $
 implies (\ref{e:c1-cond}), since $ |\ln(x)|^{p-1} \le {\rm const}\, x^{\delta},$ for all $x\in [1,\infty)$.
 The conditions (\ref{e:f-cond-1}) and (\ref{e:f-cond-2}) are also fulfilled in this case, and hence Proposition \ref{p:moments} yields
 the desired order of convergence.  The functions $f(x) = (\ln(x))^k,\ k\in\bbN$ can be treated similarly.
$\Box$

\medskip
In the rest of this section we demonstrate that Conditions \ref{cond:C1} and \ref{cond:C2}
apply to a general class of moving maxima processes.

Let $\{Z_n\}_{n \in \mathbb{N}}$ be a sequence of i.i.d.\ random variables with the cumulative distribution
function $P\{ Z \leq z \} = F_Z(z)$.  As in \cite{stoev:michailidis:taqqu:2006}, we suppose that
\begin{equation}\label{e:C1_Z}
  F_Z(z) = \exp\{-c(z)z^{-\alpha} \},\ \ z>0,
\end{equation}
and impose two further conditions, analogous to Conditions \ref{cond:C1} and \ref{cond:C2}.

\begin{condition}\label{cond:C1-prime} There exists $\beta'>0$, such that
\begin{equation} \label{equ:cz_condition}
 |c(z) - c_Z| \le K z^{-\beta'},\ \ \ \mbox{ for all }  z >0,
\end{equation}
where $c_Z>0$ and $K\ge 0$.
\end{condition}

\begin{condition}\label{cond:C2-prime} $F_Z(0) = 0$ and for all $x>0$,
\begin{equation}\label{equ:cz_condition2}
 c(z) \ge c \min\{1, z^\gamma\}, \ \ \mbox{ for some }\gamma \in (0,\alpha),
\end{equation}
with $c>0$.
\end{condition}

\medskip
\noi Observe that (\ref{equ:cz_condition}) implies $c(z) \to c_Z,\ z\to\infty$, and in fact
$P\{Z> z\} = 1- F_Z(z) \sim c_Z z^{-\alpha},$ as $z \to \infty.$
Define now the moving maxima process $X = \{X_k\}_{k\in\bbZ}$:
\begin{equation} \label{equ:def_of_Xk}
  X_k := \max_{1\le i\le m} a_i Z_{k-i+1}, \ \ \ k\in{\mathbb Z},
\end{equation}
with some coefficients $a_i>0,\ i = 1,\ldots, m,$ and $m\ge 1$.
The following result shows that the process $X$ satisfies conditions Conditions \ref{cond:C1} \& \ref{cond:C2}.

\begin{proposition}\label{p:m_max}  If the $Z_n$'s satisfy Conditions  \ref{cond:C1-prime} and \ref{cond:C2-prime}, then 
the process $X=\{X_k\}_{k\in{\mathbb Z}}$ in (\ref{equ:def_of_Xk}) satisfies \refeq{F-tail}, Conditions \ref{cond:C1} and \ref{cond:C2}
with $\gamma$ as in (\ref{equ:cz_condition2}),
\begin{equation}\label{e:p:m_max-1}
 \sigma_0^\alpha = c_Z \sum_{i=1}^m a_i^\alpha,\ \ 
 \beta = \min\{1, \beta'/\alpha\} \ \ \  \mbox{ and }\ \ \ \ c_1(x) := {\rm const}(1 + x^{-\beta'}),
\end{equation}
where $\beta'$ is as in (\ref{equ:cz_condition}) and where $c_X:= c_Z \max_{1\le i\le m} a_i^\alpha$.
In particular, the extremal index of $X$ is $\theta = c_X/\sigma_0^\alpha =
{\max_{1\le i \le m} a_{i}^\alpha / \sum_{i=1}^m a_i^\alpha}.$
\end{proposition}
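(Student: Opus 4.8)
The plan is to exploit the finite memory $m$ of the moving maximum: I would split $M_n=\bigvee_{1\le k\le n}X_k$ into a ``bulk'' part that is a maximum of i.i.d.\ rescaled innovations plus a ``boundary'' part that depends on only $2(m-1)$ innovations, and then compare the resulting $c(n,x)$ with $c_X$ via Condition \ref{cond:C1-prime}. Expanding, $M_n=\bigvee_{1\le k\le n,\,1\le i\le m}a_iZ_{k-i+1}$; re-indexing by $j=k-i+1$, which runs over $2-m\le j\le n$, the innovation $Z_j$ enters $M_n$ with coefficient $b_{j,n}:=\max\{a_{k-j+1}\,:\,\max(1,j)\le k\le\min(n,j+m-1)\}$. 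For the interior indices $1\le j\le n-m+1$ every shift occurs, so $b_{j,n}=a_*:=\bigvee_{i=1}^m a_i$, whereas for the $2(m-1)$ boundary indices $j\in B_n:=\{2-m,\ldots,0\}\cup\{n-m+2,\ldots,n\}$ one has $\min_{1\le i\le m}a_i\le b_{j,n}\le a_*$. Since the interior and boundary index sets are disjoint (for $n\ge m$),
$$
 M_n=M_n'\vee M_n'',\qquad M_n':=a_*\bigvee_{j=1}^{n-m+1}Z_j,\qquad M_n'':=\bigvee_{j\in B_n}b_{j,n}Z_j,
$$
with $M_n'$ and $M_n''$ \emph{independent}, being functions of disjoint families of i.i.d.\ innovations.

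Next I would turn this into an exact expression for $c(n,x)$. From \refeq{C1_Z}, $\P\{bZ\le x\}=\exp\{-b^\alpha c(x/b)x^{-\alpha}\}$ for any $b>0$, so by independence
$$
 F_n(x)=\exp\Big\{-x^{-\alpha}\Big[\frac{n-m+1}{n}\,a_*^\alpha\,c(n^{1/\alpha}x/a_*)+\frac1n\sum_{j\in B_n}b_{j,n}^\alpha\,c(n^{1/\alpha}x/b_{j,n})\Big]\Big\},
$$
that is, $c(n,x)=c'(n,x)+\rho(n,x)$ with $c'(n,x):=\frac{n-m+1}{n}a_*^\alpha c(n^{1/\alpha}x/a_*)$ and $0\le\rho(n,x)\le\frac{2(m-1)}{n}a_*^\alpha\max_{j\in B_n}c(n^{1/\alpha}x/b_{j,n})$. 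Now Condition \ref{cond:C1-prime} gives $c(z)\le c_Z+Kz^{-\beta'}$ for all $z>0$, and since every argument above is of the form $n^{1/\alpha}x/b$ with $b\le a_*$, I would obtain, with constants depending only on $\alpha,m,K,c_Z$ and the $a_i$'s,
$$
 |c'(n,x)-c_X|\le a_*^\alpha\,|c(n^{1/\alpha}x/a_*)-c_Z|+\frac{m-1}{n}a_*^\alpha c_Z\le C\big(x^{-\beta'}n^{-\beta'/\alpha}+n^{-1}\big),
$$
and, in the same way, $\rho(n,x)\le C(x^{-\beta'}n^{-\beta'/\alpha}+n^{-1})$. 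Because $\beta=\min\{1,\beta'/\alpha\}$ forces $n^{-\beta'/\alpha}\le n^{-\beta}$ and $n^{-1}\le n^{-\beta}$, adding these bounds (the finitely many $n<m$, where no interior index exists, being handled by enlarging the constant) yields $|c(n,x)-c_X|\le c_1(x)n^{-\beta}$ with $c_1(x)=\const(1+x^{-\beta'})={\cal O}(x^{-\beta'})$ as $x\downarrow0$, which is Condition \ref{cond:C1} with the stated $\beta$, $c_1$, and $c_X=a_*^\alpha c_Z$.

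For Condition \ref{cond:C2} I would discard $\rho\ge0$ and invoke Condition \ref{cond:C2-prime}: since $n^{\gamma/\alpha}\ge1$,
$$
 c(n,x)\ge c'(n,x)\ge\frac{n-m+1}{n}a_*^\alpha\,c\,\min\{1,(n^{1/\alpha}x/a_*)^\gamma\}\ge c_2\min\{1,x^\gamma\}
$$
for all $n\ge 2m$, which is Condition \ref{cond:C2} with the same $\gamma$. Finally, \refeq{F-tail}--\refeq{Lx} for the marginal follow from $\P\{X_k\le x\}=\exp\{-x^{-\alpha}\sum_{i=1}^m a_i^\alpha c(x/a_i)\}$ together with $c(x/a_i)\to c_Z$, giving $\P\{X_k>x\}\sim\sigma_0^\alpha x^{-\alpha}$ with $\sigma_0^\alpha=c_Z\sum_{i=1}^m a_i^\alpha$; the extremal index $\theta=c_X/\sigma_0^\alpha$ is then immediate from the Remark following Conditions \ref{cond:C1}--\ref{cond:C2}. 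I do not expect a genuine conceptual obstacle here; the real work is the uniformity in $x>0$. The delicate point is that $c(z)$ may blow up as $z\downarrow0$, so one must keep the rescaled arguments $n^{1/\alpha}x/b$ intact throughout and merge the two distinct decay rates --- $n^{-1}$, from the boundary correction and from the factor $\frac{n-m+1}{n}$, and $n^{-\beta'/\alpha}$, from Condition \ref{cond:C1-prime} --- into the single rate $n^{-\beta}$, while checking that the $x$-dependence collapses exactly to the weight $c_1(x)=\const(1+x^{-\beta'})$.
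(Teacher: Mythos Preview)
Your proof is correct and follows essentially the same approach as the paper: both re-index $M_n$ by the innovation label $j=k-i+1$, split into a bulk of $n-m+1$ terms carrying the coefficient $a_*=\bigvee_i a_i$ and $2(m-1)$ boundary terms, write out $c(n,x)$ exactly, and then bound $|c(n,x)-c_X|$ by applying Condition~\ref{cond:C1-prime} termwise to produce the two rates $n^{-1}$ and $n^{-\beta'/\alpha}$, while Condition~\ref{cond:C2} is obtained by dropping the nonnegative boundary contribution and invoking Condition~\ref{cond:C2-prime} on the bulk. The only cosmetic differences are that the paper keeps the two boundary pieces separate (its $A_1,A_3$) rather than lumping them into a single $\rho(n,x)$, and that your remark on the independence of $M_n'$ and $M_n''$ is not actually needed, since the product formula for $F_n$ already follows from the independence of the individual $Z_j$'s.
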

\begin{proof} We first derive the marginal distribution of the $X_k$'s. By (\ref{e:C1_Z}) and (\ref{equ:def_of_Xk}), we have
$$
 \P\{X_k \leq x \} = P\{ Z_k \leq x/a_1, \dots, Z_{k-m+1} \leq x/a_m\} = \exp\{ -\sum_{i=1}^{m}c(x/a_i)a_i^{\alpha}x^{-\alpha}\}.
$$
Thus, in view of (\ref{equ:cz_condition}), $c(x/a_i)\to c_Z,\ x\to\infty$, and hence, as $x\to\infty$
\begin{equation}\label{e:c_X}
 \P\{ X_k > x\} \sim \sigma_0^\alpha x^{-\alpha},\ \ \ \mbox{ where } \sigma_0^\alpha := c_Z \sum_{i=1}^{m} a_i^{\alpha}.
\end{equation}
We now focus on the maxima $M_n:= \max_{1\le i \le n} X_i.$ For $n > m$, and $x>0$, we have that
$F_n(x):= \P\{M_n/n^{1/\alpha}\le x\}$ equals
\begin{eqnarray}
  F_n(x) & = &  \P\{X_1 \leq n^{1/\alpha}x, \dots, X_n \leq n^{1/\alpha}x \} \nonumber \\
   & = & \P{\Big\{}\bigvee_{j=2-m}^{0} g_{j,m} Z_j \leq n^{1/\alpha}x, \ \bigvee_{j=1}^{n-m+1}a_{(1)}Z_j \leq n^{1/\alpha}x, \
      \bigvee_{j=0}^{m-2} h_j Z_{n-j} \leq n^{1/\alpha}x {\Big\}} \nonumber
\end{eqnarray}
where
$$
  a_{(1)} := \bigvee_{k=1}^{m}a_k, \ \ \ \ \ g_{j,m} = \bigvee_{k=2-j}^{m}a_k, \ \ \ \ \ h_j= \bigvee_{k=1}^{1+j}a_k.
$$
Therefore, by using the independence of the $Z_j$'s and Relation (\ref{e:C1_Z}), we get
$F_n(x) = \exp\{ -c(n,x)x^{-\alpha}  \},$ $x>0,$ where
\begin{equation} \label{equ:c(n,x)term}
 c(n,x) = \frac{1}{n} {\Big(} \sum_{j=2-m}^{0}c(n^{1/\alpha}x/g_{j,m}) g_{j,m}^{\alpha} + (n-m+1) a_{(1)}^{\alpha} c(n^{1/\alpha}x/a_{(1)})
  +  \sum_{j=0}^{m-2}c(n^{1/\alpha}x/h_j) h_j^{\alpha} {\Big)}.
\end{equation}

We will now show that Relation (\ref{e:C1}) holds with $\beta$ and $c_1(\cdot)$ as in (\ref{e:p:m_max-1}).
Let $c_X:= c_Z a_{(1)} = c_Z \max_{1\le i\le m} a_i^\alpha$. By (\ref{equ:c(n,x)term}), we have
\begin{eqnarray}
  |c(n,x) - c_X | & = &  |c(n,x) - c_Z a_{(1)}^{\alpha} | \nonumber \\
  & \le & \frac{1}{n} \sum_{j=2-m}^{0}{\Big|}c(n^{1/\alpha}x/g_{j,m}) - c_Z {\Big|} g_{j,m}^{\alpha} + \frac{(n-m+1)}{n}
    {\Big|}c(n^{1/\alpha}x/a_{(1)}) - c_Z{\Big|}  a_{(1)}^{\alpha} \nonumber \\
& & \ \ \ \ +  \frac{1}{n}\sum_{j=0}^{m-2} {\Big|}c(n^{1/\alpha}x/h_j) - c_Z {\Big|} h_j^{\alpha} + \frac{C}{n} =: A_1 + A_2 + A_3 + \frac{C}{n},
 \label{e:c(n,x)-theta_ c_X}
\end{eqnarray}
where the constant $C$ does not depend on $x$. In the last relation, we add and subtract the finite number of
$2(m-1)$ terms of the type $g_{j,m}^\alpha c_Z$ and $h_j ^\alpha c_Z$ and apply the triangle inequality.

Now, by applying Relation (\ref{equ:cz_condition}) to each one of the absolute value terms in $A_1$,
we obtain
\begin{equation}\label{e:A1-c(n,x)}
A_1 \le \frac{Ka_{(1)}^\alpha }{n}\sum_{j=2-m}^{0} n^{-\beta'/\alpha} x^{-\beta'} g_{j,m}^{\beta'} \le \frac{m-1}{n^{1+\beta'/\alpha}}
K a_{(1)}^{\alpha+\beta'} x^{-\beta'} = \frac{C_1}{n^{1+\beta'/\alpha}} x^{-\beta'},
\end{equation}
where the constant $C_1$ does not depend on $n$ and $x$ and where in the last inequalities we used that $g_{j,m}\le a_{(1)}$.
One obtains a similar bound for the term $A_3$ in (\ref{e:c(n,x)-theta_ c_X}):
\begin{equation}\label{e:A3-c(n,x)}
 A_3 \le \frac{C_3}{n^{1+\beta'/\alpha}} x^{-\beta'},
\end{equation}
where the constant $C_3$ does not depend on $n$ and $x$.

Now, for the term $A_2$ in (\ref{e:c(n,x)-theta_ c_X}), we also have by (\ref{equ:cz_condition}) that
\begin{equation}\label{e:A2-c(n,x)}
 A_2 \le \frac{n-m+1}{n} K a_{(1)}^{\alpha + \beta'} x^{-\beta'} n^{-\beta'/\alpha} \le \frac{C_2}{n^{\beta'/\alpha}} x^{-\beta'},
\end{equation}
where the constant $C_2$ does not depend on $n$ and $x$.

By combining the bounds in (\ref{e:A1-c(n,x)}) -- (\ref{e:A2-c(n,x)}), for the terms in (\ref{e:c(n,x)-theta_ c_X}), we obtain
$$
 |c(n,x) - c_X | \le \frac{(C_1+C_3)}{n^{1+\beta'/\alpha}} x^{-\beta'} + \frac{C_2}{n^{\beta'/\alpha}} x^{-\beta'} + \frac{C}{n},
$$
which shows that (\ref{e:C1}) holds with $c_1(x) = {\rm const}\, (1+x^{-\beta'}),$ where $\beta:= \beta'/\alpha$.

We now show that (\ref{e:C2}) holds. Since (\ref{e:C2}) involves a lower bound, we can ignore the two positive sums in
(\ref{equ:c(n,x)term}).  Recall (\ref{equ:cz_condition2})  and note that
$
  c(n^{1/\alpha}x/a_{(1)}) \ge
   c_2' \min\{1,  (n^{1/\alpha}x/a_{(1)})^{\gamma} \}.
$
Since, for sufficiently large $n$, $n^{1/\alpha}> a_{(1)}$, and $(n^{1/\alpha}x/a_{(1)})^{\gamma} \geq x^{\gamma},$
we obtain
$
 c(n^{1/\alpha}x/a_{(1)}) \geq c_2' \min\{1, x^{\gamma} \}.
$
Therefore, by (\ref{equ:c(n,x)term}), since for all sufficiently large $n$, $(n-m+1)/n\ge 1/2$, we have
$
  c(n,x) \geq c_2 \min\{1, x^{\gamma} \},
$
where $c_2 = a_{(1)}^{\alpha}c_2'/2$. This implies (\ref{e:C2}) and completes the proof of the proposition.
$\Box$
\end{proof}

\subsection{Auxiliary lemmas}

\noi The next three lemmas were used in the proof of Theorem \ref{t:main}.

\begin{lemma} \label{l:m-dep-1} Under the conditions of Theorem \ref{t:main}, for all $j > \log_2 m$,
we have
$$
 {\rm Var}(Y_j -  \widetilde Y_j ) \le \frac{3}{n_j} {\rm Var}(\log_2 (D(j,1)/ \widetilde{D}(j,1))).
$$
\end{lemma}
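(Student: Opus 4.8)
The plan is to set $U_k := \log_2\bigl(D(j,k)/\widetilde D(j,k)\bigr) = \log_2 D(j,k) - \log_2 \widetilde D(j,k)$, so that $Y_j - \widetilde Y_j = \frac{1}{n_j}\sum_{k=1}^{n_j} U_k$, and then to exploit the observation that for $j>\log_2 m$ the sequence $\{U_k\}_{k\ge 1}$ is strictly stationary and $1$-dependent. The whole content of the lemma is that, because of this $1$-dependence, all but $O(n_j)$ of the $n_j^2$ covariance terms in ${\rm Var}(\sum_k U_k)$ vanish, which is exactly what turns a trivial $O(1)$ bound into the $O(1/n_j)$ bound claimed.

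First I would record the measurability structure of $U_k$: since $D(j,k)$ and $\widetilde D(j,k)$ are both maxima over sub-ranges of the index block $\{2^j(k-1)+1,\dots,2^j k\}$, the variable $U_k$ is a Borel function of $\bigl(X(2^j(k-1)+i)\bigr)_{i=1}^{2^j}$. If $|k-k'|\ge 2$, the two index blocks are separated by a gap of at least $2^j+1$ indices; since $j>\log_2 m$ gives $2^j>m$, this gap exceeds $m$, so by the $m$-dependence of $X$ the variables $U_k$ and $U_{k'}$ are independent and hence ${\rm Cov}(U_k,U_{k'})=0$. No independence is claimed for adjacent blocks ($|k-k'|=1$), since those blocks abut. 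I would also note, as invoked in the proof of Theorem \ref{t:main}, that Proposition \ref{p:log-moments} with $p=2$ gives $\log_2 D(j,1)$ and $\log_2\widetilde D(j,1)$ (which are distributed as $\log_2 M_{2^j}$ and $\log_2 M_{2^j-m}$ by stationarity) in $L^2$, so every $U_k\in L^2$ and the covariances below are well defined and finite.

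Next, expanding the variance of the sum and discarding the vanishing covariances leaves only the diagonal and the nearest off-diagonal terms:
\[
 {\rm Var}\Bigl(\sum_{k=1}^{n_j} U_k\Bigr) = \sum_{k=1}^{n_j}{\rm Var}(U_k) + 2\sum_{k=1}^{n_j-1}{\rm Cov}(U_k,U_{k+1}).
\]
By strict stationarity, ${\rm Var}(U_k) = {\rm Var}(U_1) = {\rm Var}\bigl(\log_2(D(j,1)/\widetilde D(j,1))\bigr)$ for every $k$, and by the Cauchy--Schwarz inequality ${\rm Cov}(U_k,U_{k+1}) \le \sqrt{{\rm Var}(U_k)\,{\rm Var}(U_{k+1})} = {\rm Var}(U_1)$. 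Hence ${\rm Var}(\sum_{k=1}^{n_j}U_k) \le n_j\,{\rm Var}(U_1) + 2(n_j-1)\,{\rm Var}(U_1) \le 3 n_j\,{\rm Var}(U_1)$, and dividing by $n_j^2$ gives the stated inequality.

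I do not expect a genuine obstacle. The only step that needs care is the index-gap arithmetic establishing the $1$-dependence of $\{U_k\}$ precisely when $j>\log_2 m$, together with noticing that this is what produces the rate $1/n_j$ and that the constant $3$ arises simply as ``one diagonal plus two off-diagonals''; a secondary, routine point is the $L^2$ integrability underpinning the use of Cauchy--Schwarz, which Proposition \ref{p:log-moments} supplies.
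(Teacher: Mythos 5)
Your proof is correct and follows essentially the same route as the paper's: define $\xi_k=\log_2(D(j,k)/\widetilde D(j,k))$, note that $m$-dependence together with $2^j>m$ kills all covariances beyond lag one, and bound the surviving lag-one covariance by Cauchy--Schwarz to get the constant $3$ and the $1/n_j$ rate. The extra remark on $L^2$ integrability via Proposition \ref{p:log-moments} is a harmless (and reasonable) addition that the paper leaves implicit.
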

\begin{proof} For notational simplicity, let $\xi_k:=  \log_2(D(j,k)/ \widetilde{D}(j,k)),\ k=1,\ldots,n_j$.
We have, by the stationarity of $\xi_k$ in $k$, that
$$
{\rm Var}(Y_j - \widetilde{Y_j}) = \frac{1}{n_j} {\rm Var}(\xi_1)  + \frac{2}{n_j^2} \sum_{k=1}^{n_j-1} (n_j-k)
 {\rm Cov}(\xi_{k+1},\xi_1).
$$
Note that $\xi_{k+1} = \log_2(D(j,1+k)/ \widetilde{D}(j,1+k))$ and $\xi_1= \log_2(D(j,1)/ \widetilde{D}(j,1))$
are independent if $k>1$.  Indeed, this follows from the fact that the process $X$ is $m-$dependent, and since $\xi_{k+1}$
and $\xi_1$ depend on blocks of the data separated by at least $2^j > m$ lags.  Therefore, only the
lag--1 covariances in the above sum will be non--zero and hence
$$
{\rm Var}(Y_j - \widetilde{Y_j}) \le \frac{1}{n_j}  {\rm Var}(\xi_1)
+ \frac{2}{n_j} {\Big|}{\rm Cov}(\xi_{2},\xi_1) {\Big|} \le \frac{3}{n_j} {\rm Var}(\xi_1),
$$
since by the Cauchy--Schwartz inequality we have $|{\rm Cov}(\xi_2,\xi_1)| \le {\rm Var}(\xi_2)^{1/2} {\rm Var}(\xi_1)^{1/2} = {\rm Var}(\xi_1)$.
This completes the proof of the lemma.
$\Box$
\end{proof}

\begin{lemma} \label{lemma:convergence_of_ratio_of_BM_to_degenrate_RV} Under the conditions of Theorem \ref{t:main},
for any fixed $k$, we have ${D(j,k) / \widetilde{D}(j,k)} \stackrel {P}{\longrightarrow} 1,$ as $j\to\infty.$
\end{lemma}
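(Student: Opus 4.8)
The plan is to isolate the small, $j$-independent piece by which $D(j,k)$ exceeds $\widetilde{D}(j,k)$ and to show that it is asymptotically negligible compared with $\widetilde{D}(j,k)$, which grows like $2^{j/\alpha}$. First I would write, for $j$ large enough that $2^j>m$,
\[
 D(j,k) = \widetilde{D}(j,k)\vee B_j,\qquad B_j:=\bigvee_{i=2^j-m+1}^{2^j}X(2^j(k-1)+i),
\]
so that $1\le D(j,k)/\widetilde{D}(j,k)=1\vee (B_j/\widetilde{D}(j,k))$, the ratio being well defined a.s.\ since $F(0)=0$. Hence it suffices to prove $B_j/\widetilde{D}(j,k)\stackrel{P}{\longrightarrow}0$ as $j\to\infty$.

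Second, by strict stationarity $B_j\stackrel{d}{=}\bigvee_{i=1}^m X(i)$, a law that does not depend on $j$, so the family $\{B_j\}_j$ is tight; likewise $\widetilde{D}(j,k)\stackrel{d}{=}M_{2^j-m}$ with $M_n:=\max_{1\le i\le n}X(i)$. I would then observe that $M_n\stackrel{P}{\longrightarrow}\infty$: indeed, by Remark~2 following Conditions \ref{cond:C1}--\ref{cond:C2} (or by Proposition \ref{p:log-moments}), $M_n/n^{1/\alpha}\stackrel{d}{\longrightarrow}c_X^{1/\alpha}Z$ with $Z$ standard $\alpha$-Fr\'echet and hence a.s.\ positive, so for any fixed $M>0$ we have $\P\{M_n<M\}\le \P\{M_n/n^{1/\alpha}\le\delta\}$ for every $\delta>0$ once $n$ is large, and letting $n\to\infty$ and then $\delta\downarrow0$ gives $\P\{M_n<M\}\to0$ since the limit law has no mass at the origin. (Alternatively, $M_n$ is nondecreasing in $n$, and $m$-dependence makes $\{X(k(m+1))\}_{k\ge1}$ i.i.d.\ with $\P\{X>M\}>0$, so Borel--Cantelli gives $\sup_k X(k)=\infty$ a.s., whence $M_n\uparrow\infty$ a.s.)

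Third, I would combine tightness of the numerator with divergence of the denominator through the elementary bound, valid for any $\varepsilon>0$ and any $M>0$:
\[
 \P\{B_j>\varepsilon\,\widetilde{D}(j,k)\}\le \P\{B_j>M\}+\P\{\widetilde{D}(j,k)<M/\varepsilon\}.
\]
Given $\eta>0$, first pick $M$ with $\P\{\bigvee_{i=1}^m X(i)>M\}<\eta/2$ (legitimate since this probability is independent of $j$), then let $j\to\infty$ so that $\P\{\widetilde{D}(j,k)<M/\varepsilon\}=\P\{M_{2^j-m}<M/\varepsilon\}<\eta/2$. This yields $B_j/\widetilde{D}(j,k)\stackrel{P}{\longrightarrow}0$, hence $D(j,k)/\widetilde{D}(j,k)=1\vee(B_j/\widetilde{D}(j,k))\stackrel{P}{\longrightarrow}1$.

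The only point where dependence might seem to intervene is that $B_j$ and $\widetilde{D}(j,k)$ are not independent, but the union bound above sidesteps this entirely, so I do not anticipate any real obstacle; the single ``soft'' ingredient is $M_n\stackrel{P}{\longrightarrow}\infty$, which is immediate from either argument indicated above.
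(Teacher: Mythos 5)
Your proof is correct and follows essentially the same route as the paper's: isolate the leftover maximum over the last $m$ observations (your $B_j$, the paper's $R$), note its law is free of $j$ by stationarity, use the Fr\'echet limit of $\widetilde D(j,k)/2^{j/\alpha}$ guaranteed by Condition \ref{cond:C1} to get $\widetilde D(j,k)\stackrel{P}{\to}\infty$, and conclude by a union bound. The only cosmetic difference is that you split at a fixed large level $M$ chosen after $\varepsilon,\eta$, while the paper splits at the growing level $2^{j\delta}$ with $\delta\in(0,1/\alpha)$.
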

\begin{proof} Let $\delta \in (0,1/\alpha)$ be arbitrary and observe that
\begin{equation}\label{e:D-D-tilde-ineq}
\P\{ D(j,k)/\widetilde D(j,k) < 1 \} = \P \{ R > \widetilde D(j,k)\}
\le \P \{ R > 2^{j\delta} \} + \P\{ 2^{j\delta} > \widetilde D(j,k)\},  \end{equation}
where $R = \max_{1\le i \le m} X_{2^{j}(k-i)+1}$.  Now, by stationarity,
$$\P\{ R > 2^{j\delta}\} = \P\{ \max_{1\le i \le m} X_i > 2^{j\delta}\} \to 0,
\ \ \mbox{ as }j\to\infty.
$$
On the other hand, Relation \refeq{C1} implies that $2^{-j/\alpha} \widetilde D(j,k) \stackrel{d}{\to}
Z,$ as $n\to\infty$, where $Z$ is a non--degenerate $\alpha-$Fr\'echet variable. Thus, since $\delta\in (0,1/\alpha)$,
we have that
$$
\P\{ 2^{j\delta} > \widetilde D(j,k)\} \to 0,\ \ \ \ \mbox{ as } j\to\infty.
$$
The last two convergences and the inequality (\ref{e:D-D-tilde-ineq})
imply that $\P\{ D(j,k)/\wtilde D(j,k) <1\} \to 0,\ j\to\infty$.  Since trivially $\P\{ D(j,k)/\wtilde D(j,k) >1\} =1$, we obtain
$D(j,k)/\wtilde D(j,k)$ converges in distribution to the constant $1$, as $j\to\infty$. 
This completes the proof since convergence in distribution
to a constant implies convergence in probability.
$\Box$
\end{proof}

\begin{lemma} \label{lemma:uniform_integrable_of_logratios}
The set of random variables
${\Big|} \log_2 {\Big(} \frac{D(j,k)}{ \widetilde{D}(j,k)} {\Big)} {\Big|}^p,\ \ \ j, k\in\bbN $ 
is uniformly integrable, for all $p > 0$, where $D(j,k)$ and $\wtilde D(j,k)$ are as in Theorem \ref{t:main}.
\end{lemma}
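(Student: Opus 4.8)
The plan is to establish the stronger statement that, for every $q>0$,
$$
 \sup_{j,k}\,\E\Big[\,\big|\log_2\big(D(j,k)/\wtilde D(j,k)\big)\big|^{q}\,\Big]<\infty,
$$
the supremum being over all $k\in\bbN$ and all $j$ with $2^{j}>m$ (the natural range, as in Lemma \ref{l:m-dep-1}). A bound in $L^{q}$ that is uniform in $j,k$ for some $q>p$ forces uniform integrability of the $p$-th powers, since for a nonnegative random variable $Y$ and $A>0$ one has $\E\big[Y^{p}\mathbf{1}_{\{Y^{p}>A\}}\big]\le A^{-(q-p)/p}\,\E[Y^{q}]$, so the corresponding truncated expectations tend to $0$ uniformly as $A\to\infty$. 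Any $q>p$ will do, and in fact the bound will hold for all $q>0$.

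First I would exploit the dyadic structure to peel off the last $m$ observations of each block. Set $R_{j,k}:=\bigvee_{i=2^{j}-m+1}^{2^{j}}X(2^{j}(k-1)+i)$, so that $D(j,k)=\wtilde D(j,k)\vee R_{j,k}$. Since the $X(i)$'s are a.s.\ positive, $\wtilde D(j,k)>0$ a.s., and therefore
$$
 0\le\log_2\frac{D(j,k)}{\wtilde D(j,k)}=\big(\log_2 R_{j,k}-\log_2\wtilde D(j,k)\big)^{+}\le\big(\log_2 R_{j,k}\big)^{+}+\big(\log_2\wtilde D(j,k)\big)^{-}.
$$
Thus, up to a constant depending only on $q$, it suffices to bound $\E\big[((\log_2 R_{j,k})^{+})^{q}\big]$ and $\E\big[((\log_2\wtilde D(j,k))^{-})^{q}\big]$ uniformly in $j,k$. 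By strict stationarity, $R_{j,k}\stackrel{d}{=}\bigvee_{i=1}^{m}X(i)=:M_m$ and $\wtilde D(j,k)\stackrel{d}{=}\bigvee_{i=1}^{2^{j}-m}X(i)=M_{2^{j}-m}$, so neither expectation depends on $k$.

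The positive part is immediate: by \refeq{F-tail} and \refeq{Lx}, $\P\{M_m>x\}\le m\,\P\{X(1)>x\}\sim m\sigma_0^{\alpha}x^{-\alpha}$, so $M_m$ has a polynomially decaying right tail and $\E\big[((\log_2 M_m)^{+})^{q}\big]<\infty$ for every $q>0$, with a value not involving $j$ or $k$. The negative part is the crux, and this is precisely where Condition \ref{cond:C2} is used. Let $n_0$ be a threshold beyond which \refeq{C2} holds. For $j$ with $2^{j}-m\ge n_0$, unwinding the definition $\P\{M_n/n^{1/\alpha}\le x\}=\exp\{-c(n,x)x^{-\alpha}\}$ together with $c(n,y)\ge c_2\min\{1,y^{\gamma}\}$ gives, after a short computation, $\P\{\wtilde D(j,k)\le x\}\le\exp\{-c_2\,x^{-(\alpha-\gamma)}\}$ for all $0<x\le1$, with $c_2,\gamma$ as in Condition \ref{cond:C2}; this bound is uniform in $j$ and $k$ because $\alpha-\gamma>0$. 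Consequently, substituting $x=2^{-t}$,
$$
 \E\big[((\log_2\wtilde D(j,k))^{-})^{q}\big]=\int_0^{\infty}q\,t^{q-1}\,\P\{\wtilde D(j,k)<2^{-t}\}\,dt\le\int_0^{\infty}q\,t^{q-1}\exp\{-c_2\,2^{t(\alpha-\gamma)}\}\,dt<\infty,
$$
uniformly in $j,k$. Combining with the positive part gives the claimed uniform $L^{q}$ bound over all $j$ with $2^{j}\ge m+n_0$. The finitely many remaining scales ($m<2^{j}<m+n_0$) contribute, by stationarity in $k$, only finitely many distinct distributions, each with finite $q$-th moment, hence an automatically uniformly integrable finite family; together with the previous bound this completes the argument.

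The step I expect to be the main obstacle is the control of the lower tail of $\wtilde D(j,k)$: one has to recognize that, unlike the fixed-size maximum $R_{j,k}$, the block defining $\wtilde D(j,k)$ contains $2^{j}-m\to\infty$ observations, so Condition \ref{cond:C2} may be applied to the law of $M_{2^{j}-m}$ and delivers a super-exponential bound on $\P\{\wtilde D(j,k)\le x\}$ near $x=0$ that is uniform in the scale $j$. Once this observation is in place, the decomposition in the second paragraph and the elementary $L^{q}$-to-uniform-integrability passage make the rest routine.
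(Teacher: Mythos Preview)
Your proof is correct, but it takes a genuinely different route from the paper's. The paper centers both block maxima by the normalizing sequence $2^{j/\alpha}$ and writes
\[
\Big|\log_2\frac{D(j,k)}{\wtilde D(j,k)}\Big|^{q}\le 2^{q}\Big|\log_2\frac{D(j,k)}{2^{j/\alpha}}\Big|^{q}+2^{q}\Big|\log_2\frac{\wtilde D(j,k)}{2^{j/\alpha}}\Big|^{q},
\]
and then invokes Proposition~\ref{p:log-moments} as a black box: since $\E|\log_2(M_n/n^{1/\alpha})|^{q}\to\E|\log_2 Z|^{q}$ for both maxima, the moments stay bounded in $j$. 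You instead decompose $D(j,k)=\wtilde D(j,k)\vee R_{j,k}$, bound $(\log_2 R_{j,k})^{+}$ from the heavy-tailed marginal alone, and bound $(\log_2\wtilde D(j,k))^{-}$ directly from Condition~\ref{cond:C2}. Your approach is more elementary and self-contained: it never calls Proposition~\ref{p:log-moments} and, notably, does not use Condition~\ref{cond:C1} at all, so it isolates Condition~\ref{cond:C2} as the only structural input needed for uniform integrability. The paper's argument, on the other hand, is shorter given the machinery already developed. One small caveat: your treatment of the ``finitely many remaining scales'' tacitly assumes $\E[((\log_2 M_n)^{-})^{q}]<\infty$ for each fixed small $n$, which is not guaranteed by the standing assumptions alone; but the paper's proof has the same implicit restriction to large $j$, and only large $j$ is ever needed in Theorem~\ref{t:main}.
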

\begin{proof}  Let $q>p$ be arbitrary. By using the inequality $|x+y|^q \le 2^{q} (|x|^q + |y|^q),\ \ x,y\in\bbR$,
we get
$$
\E {\Big|}\log_2 \frac{D(j,k)}{ \widetilde{D}(j,k)} {\Big|}^q \le 2^{q} \E |\log_2 (D(j,k)/2^{j/\alpha})|^q  + 2^{q} \E |\log_2 (\wtilde D(j,k)/2^{j/\alpha})|^q.
$$
In view of Proposition \ref{p:log-moments}, applied to the block--maxima $D(j,k)$ and $\wtilde D(j,k)$, we obtain
$$
\E |\log_2 (D(j,k)/2^{j/\alpha})|^q = \E |\log_2(M_{2^j}/2^{j/\alpha})|^q \longrightarrow {\rm const},\ \ \mbox{ as }j\to\infty.
$$
Thus the set $\{\E |\log_2 (D(j,k)/2^{j/\alpha})|^q,\ j,k\in\bbN\}$ is bounded. We similarly have that the set
$\{\E |\log_2 (\wtilde D(j,k)/2^{j/\alpha})|^q,\ j,k\in\bbN\}$ is  bounded since $\log_2(2^{j} - m) \sim j,\ j\to \infty$, 
for any fixed $m$.

We have thus shown that
$$\sup_{j,k\in\bbN} \E {\Big|}\log_2 \frac{D(j,k)}{ \widetilde{D}(j,k)} {\Big|}^q <\infty,$$
for  $q>p$, which yields the desired uniform integrability. $\Box$
\end{proof}

\medskip
\noi We now present the proofs of Lemmas \ref{l:lin-proc} and \ref{l:ai} in Section \ref{s:d-cons}.

\medskip
\medskip\noi{\sc Proof of Lemma \ref{l:lin-proc}:}  
We start by noting that the results of Lemma 2.3, and Theorems 2.4 and 3.1 in   \cite{davis:resnick:1985} continue to hold 
for the linear process $X = \{X(k)\}_{k\in\bbZ}$, under the more general conditions in \refeq{lin-proc-cond}.  
The proofs of Theorems 2.4 and 3.1 in \cite{davis:resnick:1985} depend on the specific conditions in \eqref{e:lin-proc-cond} only 
through Lemma 2.3 and Relation (2.7) therein.  These two results (Lemma 2.3 and Relation (2.7)) are valid thanks to Lemma A.3
of \cite{mikosch:samorodnitsky:2000}.

Now, following the proof of Theorem 3.1 in  \cite{davis:resnick:1985}, introduce the 
map $T_r : M_p((0,\infty)\times \bbR\setminus\{0\}) \to \bbR^r,$
\beq\label{e:Tr}
 T_r {\Big(} \sum_{k=1}^\infty \epsilon_{(u_k,v_k)} {\Big)} : = 
 {\Big(}\vee_{u_k\in (0,1/r]} v_k, \cdots, \vee_{u_k\in(i/r, (i+1)/r]} v_k,\ \cdots, \vee_{u_k\in((r-1)/r, 1]} v_k{\Big)}.
\eeq
The map $T_r$ is simpler than than the map $T: M_p((0,\infty)\times \bbR\setminus\{0\}) \to D(0,\infty)$
considered in \cite{davis:resnick:1985}, where $D(0,\infty)$ denotes the Skorkhod space of {\it c\`adl\'ag} functions. 
The space $M_p$ of Radon point measures is equipped
with the topology of vague convergence, where a set $K\subset (0,\infty)\times \bbR\setminus\{0\}$ is compact if 
it is closed and bounded away from zero. We will argue below that the map $T_r$ is almost surely continuous when applied to
suitable Poisson random measures.

Proceeding as in the proof of  Theorem 3.1 in  \cite{davis:resnick:1985}, (by Theorem 2.4 (i) therein) we get
\beq\label{e:PRM}
 \sum_{k=1}^\infty \epsilon_{(k/m,m^{1/\alpha} X(k))} \Longrightarrow \sum_{i=0}^\infty \sum_{k=1}^\infty \epsilon_{(t_k,j_k c_i)}, 
\eeq
where '$\Rightarrow$' denotes weak convergence of point processes and where $\epsilon_{(t,j)}$ denotes a point measure
with unit mass concentrated at $(t,j)\in (0,\infty) \times \bbR\setminus\{0\}$. In \refeq{PRM},
$\{(t_k,j_k)\}_{k\ge 0}$ are the points of a Poisson random measure (PRM) with intensity measure 
$$\mu(dt,dx) = dx \times \lambda(dx),\ \ \mbox{ where }\ \
 \lambda(dx) = \alpha p x^{-\alpha-1}1_{(0,\infty)}(x) dx + \alpha (1-p) (-x)^{-\alpha-1}1_{(-\infty,0)}(x) dx,
$$
(recall the distribution of the $\xi_k$'s above and see (2.1) in \cite{davis:resnick:1985}).

Let now $m = \sum_{i=0}^\infty\sum_{k = 1}^\infty \epsilon_{(t_k,j_k c_i)}$ be the PRM in \refeq{PRM} and observe that 
$\P\{ m(\partial B) = 0 \} = 1$, where 
$B:= ((0,1/r]\times \bbR\setminus\{0\} ) \cup \cdots \cup ((r-1)/r,1]\times \bbR\setminus\{0\})$
is the set associated with the map $T_r$ in \refeq{Tr}, and where $\partial B$ denotes the boundary of $B$.
Indeed, this follows from the fact that the intensity measure $\mu(dt,dx)$ of the PRM $m$ does not charge 
with positive mass sets of Lebesgue measure zero.  The fact that $\P\{ m (\partial B) = 0\} =1$ shows that,
almost surely, the points $\{(t_k,j_k)\}$ do not lie on the boundary $\partial B$. Since the points of
discontinuity of $T_r$ are at only those measures in $M_p$ with atoms on $\partial B$, it follows that 
the map $T_r$ is almost surely continuous when applied to the realizations of the PRM $m$.
Therefore, the continuous mapping theorem (see e.g.\ Theorem 3.4.3 in \cite{whitt:2002}) yields:
$$
  T_r {\Big(} \sum_{k=1}^\infty \epsilon_{(k/m,m^{1/\alpha} X(k))} {\Big)}
  \stackrel{d}{\longrightarrow}   T_r {\Big(} \sum_{i=0}^\infty \sum_{k=1}^\infty \epsilon_{(t_k,j_k c_i)} {\Big)},
 \ \ \mbox{ as }m\to\infty,
$$
where
$$
 T_r {\Big(} \sum_{i=0}^\infty \sum_{k=1}^\infty \epsilon_{(t_k,j_k c_i)} {\Big)} 
  = {\Big(} \vee_{ t_k \in (0,1/r] }  \vee_{i=0}^\infty c_i j_k,\ \ldots,\ 
 \vee_{ t_k \in ((r-1)/r,1] } \vee_{i=0}^\infty c_i j_k {\Big)} =: (Z(1),\ldots, Z(r)).
$$
However, since the intervals $(0,1/r], (1/r,2/r],\ldots, ((r-1)/r, 1]$ in \refeq{Tr} do not overlap, 
the random variables $Z(1),\ldots,Z(r)$ are independent.  Moreover, the stationarity (in $t$) of the intensity of 
the PRM shows that the $Z(k)$'s are identically distributed.  Now, it remains to argue that the $Z(k)$'s have the
desired $\alpha-$Fr\'echet distribution.  This follows as in \cite{davis:resnick:1985}, since for $Z(1)$, for
example, we have:
$$
 Z(1)= \vee_{ t_k \in (0,1/r] } \vee_{i=0}^\infty c_i j_k = \vee_{ t_k \in (0,1/r] } (c_+ j_k \vee (-c_-) j_k ),
$$
which in fact equals the extremal process $Y(t)$ therein evaluated at $t=1/r$. 
$\Box$

\medskip
\noi{\sc Proof of Lemma \ref{l:ai}:}
 For multivariate max--stable distributions, pairwise independence implies independence (Ch.\ 5 in \cite{resnick:1987}).
 Thus, it suffices to show that $Z(1)$ and $Z(2)$ are independent.  The continuous mapping theorem implies that 
 $$
 \frac{1}{m^{1/\alpha}} (X_m(1) \vee X_m(2)) \stackrel{d}{\longrightarrow} Z(1)\vee Z(2),\ \ \mbox{ as }m\to\infty.
 $$
 We also have that $X_m(1)\vee X_m(2) = X_{2m}(1)$ and since $(2m)^{-1/\alpha} X_{2m}(1) \stackrel{d}{\to} Z(1),$ as $m\to\infty$,
 we obtain
 \beq\label{e:Z_12=2Z}
   Z(1) \vee Z(2) \stackrel{d}{=} 2^{1/\alpha}Z(1).
 \eeq

 In view of \refeq{F-tail}, the marginal distributions of $Z$ can only be $\alpha-$Fr\'echet.  Thus,
since $(Z(1),Z(2))$  is a max--stable vector, Proposition 5.11' in \cite{resnick:1987}, implies
\beq\label{e:Z_12}
 \P\{ Z_1\le x_1, Z_2 \le x_2\} = \exp{\Big\{} - \int_0^1 \frac{f_1^\alpha(u)}{x_1^{\alpha}} \vee \frac{f_2^\alpha(u)}{x_2^{\alpha}} du {\Big\}},
 \  \ \ x_1, x_2 >0.
\eeq
for some non--negative functions $f_1$ and $f_2$ such that $\int_0^1 f_i^\alpha (u) du <\infty,\ i=1,2$.
Thus, by \refeq{Z_12=2Z}, for all $x>0$:
\begin{eqnarray*}
 \P\{ Z(1) \vee Z(2) \le x\} & =& \exp\{ -x^{-\alpha} \int_0^1 f_1^\alpha(u)\vee f_2^\alpha(u) du\} \\
 & = &\P\{ Z_1 \le 2^{-1/\alpha}x\}
 = \exp\{ -2 x^{-\alpha} \int_0^1 f_1^\alpha (u) du\}.
\end{eqnarray*}
This, since by stationarity $\int_0^1 f_1^\alpha(u)du = \int_0^1 f_2^\alpha(u)du $, 
yields 
$$
 \int_0^1 f_1^\alpha(u) \vee f_2^\alpha (u) du = \int_0^1 f_1^\alpha(u)du + \int_0^1 f_2^\alpha(u)du.
$$
The last relation is valid if and only if the non--negative functions $f_1(u)$ and $f_2(u)$ have disjoint supports.
This fact, in view of \refeq{Z_12}, implies the independence of $Z(1)$ and $Z(2)$.
 $\Box$

\section* 
 \small
 
%\bibliography{mst-bibfile}
% \bibliography{/afs/umich.edu/user/s/s/sstoev/doc/articles/mst-bibfile}
%\bibliography{/home/sstoev/doc/articles/drafts/max-wavelet/mst-bibfile}

\end{document}